\documentclass[oribibl,envcountreset,envcountsect]{llncs}
 
\usepackage{amssymb} 
\usepackage{amsmath} 
\usepackage[svgnames]{xcolor}

\definecolor{darkgreen}{rgb}{0,.35,0}
\definecolor{darkblue}{rgb}{0,0,.5}
\definecolor{darkred}{rgb}{.6,0,0}


\makeatletter
\spn@wtheorem{fact}{Fact}{\bfseries}{\itshape}
\makeatother

\usepackage[round]{natbib}

\newcommand{\Q}{{\mathfrak{S}}}
\renewcommand{\Q}{{\F(t)[\D;\delta]}}
\renewcommand{\S}{{\F[t][\D;\delta]}}
\newcommand{\D}{{\mathcal{D}}}
\newcommand{\CC}{{\mathbb{C}}}
\newcommand{\QQ}{{\mathbb{Q}}}
\newcommand{\NN}{{\mathbb{N}}}
\newcommand{\ZZ}{{\mathbb{Z}}}
\newcommand{\F}{{\mathsf{F}}}
\newcommand{\degD}{{\deg_{\D}}}
\newcommand{\degt}{{\deg_t}}
\newcommand{\nxn}{{n\times n}}

\newcommand{\lclm}{{\mbox{\upshape lclm}}}
\newcommand{\lcrm}{{\mbox{\upshape lcrm}}}
\newcommand{\gcld}{{\mbox{\upshape gcld}}}
\newcommand{\gcrd}{{\mbox{\upshape gcrd}}}
\newcommand{\uvec}{{\overrightarrow{u}}}
\newcommand{\vvec}{{\overrightarrow{v}}}
\newcommand{\zerovec}{{\overrightarrow{0}}}

\newcommand{\Hbar}{{\bar H}}
\newcommand{\diag}{{\mbox{\upshape diag}}}
\newcommand{\Plin}{{\widehat{P}}}
\newcommand{\Glin}{{\widehat{G}}}
\newcommand{\Alin}{{\widehat{A}}}
\newcommand{\inorm}[1]{{\|#1\|_\infty}}
\newcommand{\nonzero}{{\setminus\{0\}}}

\newcommand{\rowdeg}{{\mathop{rowdeg}}}

\numberwithin{equation}{section}

\bibliographystyle{plainnat}

\pagestyle{headings}

\begin{document}

\title{On computing the Hermite form of a matrix of differential
  polynomials}

\author{Mark Giesbrecht and Myung Sub Kim}

\institute{Cheriton School of Computer Science, University of
  Waterloo, Waterloo, Canada}


\maketitle

\begin{abstract}
  Given a matrix $A\in\F(t)[\D;\delta]^\nxn$ over the ring of
  differential polynomials, we show how to compute the Hermite form
  $H$ of $A$ and a unimodular matrix $U$ such that $UA=H$.  The
  algorithm requires a polynomial number of operations in $\F$ in
  terms of $n$, $\degD A$, $\degt A$.  When $\F=\QQ$ it require time
  polynomial in the bit-length of the rational coefficients as well.
\end{abstract}

\section{Introduction}

Canonical forms of matrices over principal ideal domains (such as
$\ZZ$ or $\F[x]$, for a field $\F$) have proven invaluable for both
mathematical and computational purposes. One of the successes of
computer algebra over the past three decades has been the development
of fast algorithms for computing these canonical forms. These include
triangular forms such as the Hermite form \citep{Hermite:1863}, low
degree forms like the Popov form \citep{Popov:1972}, as well as the
diagonal Smith form \citep{Smith:1861}.

Canonical forms of matrices over non-commutative domains, especially
rings of differential and difference operators, are also extremely
useful. These have been examined at least since \cite{Dickson:1923},
\cite{Wedderburn:1932}, and \cite{Jacobson:1943}. A typical domain
under consideration is that of differential polynomials. For our
purposes these are polynomials over a function field $\F(t)$ (where
$\F$ is a field of characteristic zero, typically an extension of
$\QQ$, or some representation of $\CC$). A differential indeterminate
$\D$ is adjoined to form the \emph{ring of differential polynomials}
$\F(t)[\D;\delta]$, which consists of the polynomials in $\F(t)[\D]$
under the usual addition and a non-commutative multiplication defined
such that $\D a=a\D+\delta(a)$, for any $a\in\F(t)$. Here
$\delta:\F(t)\to\F(t)$ is a \emph{pseudo-derivative}, a function such
that for all $a,b\in\F(t)$ we have
\[
\delta(a+b)=\delta(a)+\delta(b) ~~~\mbox{and}~~
\delta(ab)=a\delta(b)+\delta(a)b.
\] 
The most common derivation in $\F(t)$ takes $\delta(a)=a'$ for any
$a\in\F(t)$, the usual derivative of $a$, though other derivations
(say $\delta(t)=t$) are certainly of interest.

A primary motivation in the definition of $\F(t)[D;\delta]$ is that
there is a natural action on the space of infinitely differentiable
functions in $t$, namely the differential polynomial
\[
a_m\D^m+a_{m-1}\D^{m-1}+\cdots + a_1\D+a_0\in\F(t)[\D;\delta]
\]
acts as the linear differential operator
\[
a_m(t)\frac{d^m y(t)}{d
  t^m}+a_{m-1}(t)\frac{d^{m-1}y(t)}{d t^{m-1}}+
\cdots+a_1(t)\frac{d y(t)}{d t}+a_0(t)y(t)
\]
on a differentiable function $y(t)$.  Solving and analyzing systems of
such operators involves working with matrices over $\F(t)[\D;\delta]$,
and invariants such as the differential analogues of the Smith, Popov
and Hermite forms provide important structural information.

In commutative domains such as $\ZZ$ and $\F[x]$, it has been more
common to compute the triangular Hermite and diagonal Smith form (as
well as the lower degree Popov form, especially as an intermediate
computation). Indeed, these forms are more canonical in the sense of
being canonical in their class under multiplication by
unimodular matrices. Polynomial-time algorithms for the Smith and
Hermite forms over $\F[x]$ were developed by \cite{Kannan:1985}, with
important advances by \cite{KalKri87}, \cite{Villard:1995},
\cite{MuldersStorjohann:2003}, and many others.  One of the key
features of this recent work in computing normal forms has been a
careful analysis of the complexity in terms of matrix size, entry
degree, and coefficient swell. Clearly identifying and analyzing the
cost in terms of all these parameters has led to a dramatic drop in
both theoretical and practical complexity.

Computing the classical Smith and Hermite forms of matrices over
differential (and more general Ore) domains has received less
attention though normal forms of differential polynomial matrices have
applications in solving differential systems and control
theory. \cite{AbramovBronstein:2001} analyzes the number of reduction
steps necessary to compute a row-reduced form, while
\cite{BeckermannChengLabahn:2006} analyze the complexity of row
reduction in terms of matrix size, degree and the sizes of the
coefficients of some shifts of the input
matrix. \cite{BeckermannChengLabahn:2006} demonstrates tight bounds on
the degree and coefficient sizes of the output, which we will employ
here. For the Popov form, \cite{cheng:phd} gives an algorithm for
matrices of shift polynomials. Cheng's approach involves order bases
computation in order to eliminate lower order terms of Ore polynomial
matrices. A main contribution of \cite{cheng:phd} is to give an
algorithm computing the row rank and a row-reduced basis of the left
nullspace of a matrix of Ore polynomials in a fraction-free way. This
idea is extended in \cite{DaviesChengLabahn:2008} to compute Popov
form of general Ore polynomial matrices.  In
\cite{DaviesChengLabahn:2008}, they reduce the problem of computing
Popov form to a nullspace computation. However, though Popov form is
useful for rewriting high order terms with respect to low order terms,
we want a different normal form more suited to solving system of
linear diophantine equations. Since the Hermite form is upper
triangular it meets this goal nicely, not to mention the fact that it
is a ``classical'' canonical form.  In a slightly different vein,
\cite{Middeke:2008} has recently given an algorithm for the Smith
(diagonal) form of a matrix of differential polynomials, which
requires time polynomial in the matrix size and degree (but the
coefficient size is not analyzed).

In this paper, we first discuss some basic operations with polynomials
in $\F(t)[\D;\delta]$, which are typically written with respect to the
differential variable $\D$ as
\begin{equation}
  \label{eq:diffpol}
  f= f_0+f_1\D+f_2\D^2+\cdots+f_d\D^d,
\end{equation}
where $f_0,\ldots,f_d\in\F(t)$, with $f_d\neq 0$. We write $d=\degD f$
to mean the degree in the differential variable, and generally refer
to this as the \emph{degree} of $f$. Since this is a non-commutative
ring, it is important to set a standard notation in which the
coefficients $f_0,\ldots,f_d\in\F(t)$ are written to the left of the
differential variable $\D$. For $u,v\in\F[t]$ relatively prime, we can
define $\degt (u/v)=\max\{\degt u,\degt v\}$. This is extended to
$f\in\Q$ as in \eqref{eq:diffpol} by letting $\degt f=\max_i\{\degt
f_i\}$. We think of $\degt$ as measuring coefficient size or height.
Indeed, with a little extra work the bounds and algorithms in this
paper are effective over $\QQ(t)$ as well, where we also include the
bit-length of rational coefficients, as well as the degree in $t$, in
our analyses.

A matrix $U\in\Q^\nxn$ is said to be \emph{unimodular} if there exists
a $V\in\Q^\nxn$ such that $UV=I$, the $\nxn$ identity matrix. Note
that we do not employ the typical determinantal definition of a
unimodular matrix, as there is no easy notion of determinant for matrices
over $\Q$ (indeed, working around this deficiency suffuses much of our
work).

A matrix $H\in\Q^\nxn$ is said to be in \emph{Hermite form} if $H$ is
upper triangular, if every diagonal entry is monic, and every
off-diagonal entry has degree less than the diagonal entry below it.
As an example, the matrix
\[
\begin{pmatrix}
1+(t+2)\D+\D^{2} & 2+(2t+1)\D & 1+(1+t)\D\\
2t+t^{2}+t\D & 2+2t+2t^{2}+\D & 4t+t^{2}\\
3+t+(3+t)\D+\D^{2} & 8+4t+(5+3t)\D+\D^{2} & 7+8t+(2+4t)\D
\end{pmatrix}
\]
has Hermite form
\[
\begin{pmatrix}
2+{\it t+\D} & 1+2t & \frac{-2+t+2t^{2}}{2t}-\frac{1}{2t}\D\\
0 & 2+t+\D & 1+\frac{7t}{2}+\frac{1}{2}\D\\
0 & 0 & -\frac{2}{t}+\frac{-1+2t+t^{2}}{t}\D+\D^{2}
\end{pmatrix}.
\]
Note that the Hermite form may have denominators in $t$.  Also, while
this example does not demonstrate it, it is common that the degrees in
the Hermite form, in both $t$ an $\D$, are substantially larger than
in the input.

In this paper we will only concern ourselves with matrices in
$\Q^\nxn$ of full row rank, that is, matrices whose rows are
$\Q$-linear independent. For any matrix $A\in\Q^\nxn$, we show there exists a
unimodular matrix $U$ such that $UA=H$ is in Hermite form. This form
is canonical in the sense that if two matrices $A,B\in\Q^\nxn$ are
such that $A=PB$ for unimodular $P\in\Q^\nxn$ then the Hermite form of
$A$ equals the Hermite form of $B$.

The main contribution of this paper is an algorithm that, given a
matrix $A\in\Q^\nxn$ (of full row rank), computes $H$ and $U$ such 
that $UA=H$, which requires a polynomial number of $\F$-operations in
$n$, $\degD A$, and $\degt A$. It will also require time polynomial in 
the coefficient bit-length when $\F=\QQ$.

The remainder of the paper is organized as follows.  In Section 2 we
summarize some basic properties of differential polynomial rings and
present and analyze algorithms for some necessary basic operations.
In Section 3 we introduce a new approach to compute appropriate degree
bounds on the coefficients of $H$ and $U$. In Section 4 we present our
algorithm for computing the Hermite form of a matrix of differential
polynomials and analyze it completely.

\section{Basic structure and operations in $\F[t][\D;\delta]$}

In this section we discuss some of the basic structure of the ring
$\F(t)[\D;\delta]$ and present and analyze simple algorithms to do
some computations that will be necessary in the next section.

Some well-known properties of $\Q$ are worth recalling; see \linebreak
\cite{BroPet94} for an algorithmic presentation of this theory.  Given
$f,g\in\Q$, there is a degree function (in $\D$) which satisfies the
usual properties: $\degD (fg)=\degD f+\degD g$ and $\degD(f+g)\leq
\max\{\degD f,\degD g\}$.  $\Q$ is also a left and right principal
ideal ring, which implies the existence of a right (and left) division
with remainder algorithm such that there exists unique $q,r\in\Q$ such
that $f=qg+r$ where $\degD(r)<\degD(g)$.  This allows for a right (and
left) euclidean-like algorithm which shows the existence of a greatest
common right divisor, $h=\gcrd(f,g)$, a polynomial of minimal degree
(in $\D$) such that $f=uh$ and $g=vh$ for $u,v\in\Q$.  The GCRD is
unique up to a left multiple in $\F(t)\nonzero$, and there exist
co-factors $a,b\in\Q$ such that $af+bg=\gcrd(f,g)$.  There also exists
a least common left multiple $\lclm(f,g)$.  Analogously there exists a
greatest common left divisor, $\gcld(f,g)$, and least common right
multiple, $\lcrm(f,g)$, both of which are unique up to a right
multiple in $\F(t)$.

Efficient algorithms for computing products of polynomials are
developed in \cite{Hoeven:2002} and \cite{BCL:2008}, while fast
algorithms to compute the LCLM and GCRD, are developed in
\cite{Li:1997} and \cite{Li:1998}.  In this paper we will only need to
compute very specific products of the form $\D^kf$ for some $k\in\NN$.
We will work with differential polynomials in $\F[t][\D;\delta]$, as
opposed to $\F(t)[\D;\delta]$, and manage denominators separately.  If
$f\in\F[t][\D;\delta]$ is written as in \eqref{eq:diffpol}, then
$f_0,\ldots,f_d\in\F[t]$, and
\[
\D f = \sum_{0\leq i\leq d} f_i\D^{i+1} + \sum_{0\leq i\leq d}
f_i'\D^i\in\S,
\]
where $f_i'\in\F[t]$ is the usual derivative of $f_i\in\F[t]$.  Assume
$\degt f\leq e$.  It is
easily seen that $\degD(\D f)=d+1$, and $\degt(\D f)\leq e$.
The cost of computing $\D f$ is $O(de)$ operations in $\F$.
Computing $\D^kf$, for $1\leq k\leq m$ then requires $O(dem)$
operations in $\F$.

If $\F=\QQ$ we must account for the bit-length of the coefficients as
well.  Assuming our polynomials are in $\ZZ[t][\D;\delta]$ (which will
be sufficient), and are written as above, we have 
$f_i=\sum_{0\leq j\leq e} f_{ij} t^j$ for $f_{ij}\in\ZZ$.  We write 
$\inorm{f}=\max |f_{ij}|$ to capture the coefficient size of $f$.  It 
easily follows that $\inorm{\D f}\leq (e+1)\inorm{f}$, and so 
$\inorm{\D^mf}\leq (e+1)^m\inorm{f}$.

\begin{lemma}~\\[-\baselineskip]
\label{lem:shiftcost}
\begin{enumerate}
\item[(i)] Let $f\in\F[t][\D;\delta]$ have $\degD f=d$, $\degt f=e$, and
  let $m\in\NN$.  Then we can compute $\D^kf$, for $1\leq k\leq m$,
  with $O(dem)$ operations in $\F$.
\item[(ii)] Let $f\in\ZZ[t][\D;\delta]$.  Then $\inorm{\D^mf}\leq
  (e+1)^m\cdot\inorm{f}$, and we can compute $\D^if$, for $1\leq i\leq
  m$, with $O(dem\cdot (m\log e+\log\inorm{f})^2)$ bit operations.
\end{enumerate}
\end{lemma}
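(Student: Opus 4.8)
The plan is to obtain both parts by iterating the explicit formula for left multiplication by $\D$ recorded just above, $\D f=\sum_{0\le i\le d}f_i\D^{i+1}+\sum_{0\le i\le d}f_i'\D^i$, through the recurrence $\D^{k+1}f=\D(\D^kf)$. Two structural observations drive everything: a single application of $\D$ raises the degree in $\D$ by exactly one but leaves the degree in $t$ bounded by $e$ (differentiation never increases $\degt$), and it is computed from its argument by one coefficient-wise differentiation together with one degree-shifted addition.

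For part (i) I would first bound the cost of one step. Computing $\D g$ for $g$ of degree $d'$ in $\D$ and degree at most $e$ in $t$ requires differentiating each of its $d'+1$ coefficients, at $O(e)$ operations in $\F$ each, and then adding the two shifted polynomials, for $O(d'e)$ operations in total. Applying the recurrence for $k=1,\ldots,m$, the intermediate result $\D^kf$ has degree $d+k$ in $\D$ and degree at most $e$ in $t$, so step $k$ costs $O((d+k)e)$; summing $\sum_{k=1}^m O((d+k)e)$ then recovers the stated $O(dem)$ bound.

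For part (ii) the substance is the norm estimate. Writing $f_i=\sum_{0\le j\le e}f_{ij}t^j$, the coefficient of $t^j\D^i$ in $\D f$ is $f_{i-1,j}+(j+1)f_{i,j+1}$, the two contributions coming respectively from the shift term and the derivative term. Since $f_{i,j+1}$ can be nonzero only when $j+1\le e$, this quantity is at most $(e+1)\inorm{f}$ in absolute value, giving $\inorm{\D f}\le(e+1)\inorm{f}$ and hence $\inorm{\D^m f}\le(e+1)^m\inorm{f}$ by induction. It follows that every integer handled while running the recurrence up to $\D^m f$ has bit-length $O(m\log e+\log\inorm{f})$; each of the $O(dem)$ underlying operations in $\ZZ$ (additions and multiplications by integers at most $e$) then costs $O((m\log e+\log\inorm{f})^2)$ bit operations with classical arithmetic, and the product of the two estimates is exactly the claimed bit-complexity.

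The whole argument is careful bookkeeping rather than a new idea, so I do not expect a genuine obstacle. The one place I would slow down is the norm bound: one must pin the per-step constant to $e+1$ — it is easy to over-count the derivative term — and then check that the resulting geometric factor $(e+1)^m$ simultaneously governs the coefficient bit-lengths and, through them, the per-operation bit cost feeding into part (ii).
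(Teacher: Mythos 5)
Your approach is the same as the paper's: the lemma has no separate proof in the paper but is derived in the paragraph immediately preceding it from the displayed formula $\D f=\sum_{0\le i\le d}f_i\D^{i+1}+\sum_{0\le i\le d}f_i'\D^i$, iterated via $\D^{k+1}f=\D(\D^kf)$, exactly as you do. Your treatment of part (ii) is in fact more explicit than the paper's bare assertion that $\inorm{\D f}\le(e+1)\inorm{f}$: identifying the coefficient of $t^j\D^i$ in $\D f$ as $f_{i-1,j}+(j+1)f_{i,j+1}$ and noting $j+1\le e$ is the right way to pin the per-step factor to $e+1$, and the bit-length and per-operation cost accounting that follows is sound.

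The one step that does not go through as written is the final summation in part (i): $\sum_{k=1}^m O((d+k)e)$ is $O(e(dm+m^2))$, not $O(dem)$, and nothing in the hypotheses forces $m=O(d)$ (in the paper's own application $m$ is of order $nd$, typically much larger than $d$). To get closer to the stated bound you would need to observe that at step $k$ only $O(d+\min(k,e))$ coefficients of $\D^{k-1}f$ can be nonzero and hence require differentiation, because $\D^kf=\sum_j\binom{k}{j}\sum_i f_i^{(j)}\D^{i+k-j}$ and $f_i^{(j)}=0$ for $j>e$; even then the total is $O(me(d+\min(m,e)))$ rather than $O(dem)$. The paper itself silently charges $O(de)$ per step and so shares this imprecision, and it is harmless for the main theorem since this cost is dominated by the linear-system solving later on --- but you should not present $\sum_{k=1}^m(d+k)e=O(dem)$ as if it were an identity; either justify why the intermediate degrees do not inflate the count or state the weaker bound.
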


We make no claim that the above methods are the most efficient, and
faster polynomial and matrix arithmetic will certainly improve the
cost.  However, the above analysis will be sufficient, and these costs
will be dominated by others in the algorithms of later sections.

\section{Existence and degree bounds on the Hermite form}

In this section we prove the existence and uniqueness of the Hermite form
over $\F(t)[\D;\delta]$, and prove some important properties about
unimodular matrices  and equivalence over
this ring.  The principal technical difficulty is that there is no
natural determinant function with the properties found in 
commutative linear algebra.  The determinant is one of the main 
tools used in the analysis of essentially all fast
algorithms for computing the Hermite form $H$ and transformation matrix
$U$, and specifically two relevant techniques in established methods
by \cite{storjohann:ms1994} and \cite{KalKri87}. One approach might be
to employ the non-commutative determinant of \cite{Dieu:1943},
but this adds considerable complication.  Instead, we 
find degree bounds via established bounds on the row-reduced form.

\begin{definition}[Unimodular matrix] 
  Let $U\in\Q^\nxn$ and suppose there exists a $V\in\Q^\nxn$ such that
  $UV=I_n$, where $I_n$ is the identity matrix over $\Q^\nxn$. Then
  $U$ is called a \emph{unimodular} matrix over $\Q$.
\end{definition} 

\noindent
This definition is in fact symmetric, in that $V$ is also unimodular,
as shown in the following lemma (the proof of which is left to the reader).

\begin{lemma} 
  Let $U\in\Q^\nxn$ be unimodular such that there exists a
  $V\in\Q^\nxn$ with $UV=I_n$.  Then $VU=I_n$ as well.
\end{lemma}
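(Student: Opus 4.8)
The plan is to reduce the statement to the corresponding fact over a division ring, where it becomes a routine consequence of finite-dimensional linear algebra. The key structural input is that $\Q=\F(t)[\D;\delta]$ is a (left and right) Ore domain: it has no zero divisors, and the existence of the least common left and right multiples recalled above is precisely the Ore condition that guarantees a two-sided field of fractions. I would therefore begin by invoking the Ore localization to embed $\Q$ into its division ring of fractions $K$, so that $\Q^\nxn$ embeds in $K^\nxn$ and the identity $UV=I_n$ may be read inside $K^\nxn$.

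Next I would exploit that $K$ is a division ring, over which the usual theory of finite-dimensional vector spaces (bases, dimension, rank-nullity) goes through essentially verbatim. Viewing $K^n$ as a right $K$-vector space on which matrices act on the left as right-$K$-linear endomorphisms, the relation $UV=I_n$ forces $V$ to be injective: if $Vx=\zerovec$ then $x=(UV)x=U(Vx)=\zerovec$. An injective endomorphism of a finite-dimensional vector space is surjective, hence bijective, so $V$ is invertible in $K^\nxn$ with some two-sided inverse $W$. Then $U=U(VW)=(UV)W=W$, whence $VU=VW=I_n$ holds in $K^\nxn$.

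Finally I would transfer the conclusion back to $\Q$: the product $VU$ is computed entirely within $\Q^\nxn$, and we have just shown it equals $I_n$ in the larger ring $K^\nxn$; since the embedding $\Q^\nxn\hookrightarrow K^\nxn$ is injective and $I_n$ lies in $\Q^\nxn$, we conclude that $VU=I_n$ already over $\Q$, which is the desired symmetry.

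I expect the only genuine subtlety to be the bookkeeping of sides. Because coefficients in $\Q$ are written to the left of $\D$, one must fix once and for all whether $K^n$ is treated as a left or as a right $K$-vector space, and then verify that matrix multiplication acts $K$-linearly on the chosen side; with the right-space convention above the rank-nullity argument applies cleanly. Everything else is formal: the domain and Ore properties are already part of the recalled structure of $\Q$, and crucially no determinant is required, which is exactly why this localization route is preferable to the determinantal arguments one would use over a commutative ring.
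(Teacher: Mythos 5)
The paper does not actually supply a proof of this lemma---it is explicitly ``left to the reader''---so there is no argument of record to compare yours against. That said, your proof is correct and complete, and it is probably the cleanest route available. The two load-bearing points both check out: $\Q$ is a domain satisfying the left and right Ore conditions (the existence of $\lclm$ and $\lcrm$ recalled in Section~2 gives exactly the nontriviality of $\Q a\cap \Q b$ and $a\Q\cap b\Q$), so the division ring of fractions $K$ exists and $\Q^\nxn\hookrightarrow K^\nxn$; and over a division ring, rank--nullity for finite-dimensional (right) vector spaces is available, so $UV=I_n$ forces $V$ to be injective, hence bijective, hence two-sidedly invertible, and the formal computation $U=(UV)W=W$ finishes it. You are also right that the only trap is the choice of side, and your convention (columns as a right $K$-space, matrices acting on the left) is the one that makes the action linear. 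The one thing worth knowing is that the conclusion is an instance of a more general fact: matrix rings over one-sided Noetherian rings are directly finite ($UV=1\Rightarrow VU=1$), and $\Q$, being a left and right principal ideal domain, is Noetherian on both sides; your localization argument is essentially the standard proof of that fact specialized to this ring, and it fits the paper's theme of avoiding determinants entirely.
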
 

\begin{theorem} 
  \label{thm:unimod}
  Let $a,b\in\Q$.  There exists a unimodular matrix
  \[
  W=\begin{pmatrix}
    u & v\\
    s & t
  \end{pmatrix}
  \in\Q^{2\times 2}
  ~~\mbox{such that}~~
  W
  \begin{pmatrix}
    a\\ b
  \end{pmatrix}
  =
  \begin{pmatrix}
    g\\ 0
  \end{pmatrix},
  \]
  where $g=\gcrd(a,b)$ and $sa=-tb=\lclm(a,b)$. 
\end{theorem}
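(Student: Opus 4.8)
The plan is to construct $W$ explicitly from the data produced by the right Euclidean algorithm on $a$ and $b$. Running the extended $\gcrd$ computation yields cofactors $u,v\in\Q$ with $ua+vb=g$ where $g=\gcrd(a,b)$; this gives the top row of $W$ directly. For the bottom row I need $s,t\in\Q$ with $sa+tb=0$ and, moreover, $sa=-tb=\lclm(a,b)$. The natural source of such a relation is the least common left multiple: by definition $\lclm(a,b)=sa=-tb$ for suitable $s,t\in\Q$, and since $\Q$ is a principal ideal ring (as recalled in Section 2) these exist and the relation $sa+tb=0$ holds. So I would set
\[
W=\begin{pmatrix} u & v\\ s & t\end{pmatrix},
\qquad
W\begin{pmatrix} a\\ b\end{pmatrix}
=\begin{pmatrix} ua+vb\\ sa+tb\end{pmatrix}
=\begin{pmatrix} g\\ 0\end{pmatrix},
\]
which immediately verifies the claimed action of $W$.

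The substance of the proof is showing that this particular $W$ is unimodular, i.e.\ exhibiting an explicit two-sided inverse $V$, since by the preceding lemma a one-sided inverse suffices. The key algebraic fact I would invoke is the relationship between the $\gcrd$ and the $\lclm$: writing $g=\gcrd(a,b)$ and $\ell=\lclm(a,b)$, there are quotients $\bar a,\bar b\in\Q$ with $a=\bar a\, g$ and $b=\bar b\, g$, and one has $\ell=s a=-t b$ where $s$ and $\bar b$, respectively $t$ and $\bar a$, are tied together through the degree identity $\degD\ell=\degD a+\degD b-\degD g$. Concretely I expect the inverse to be
\[
V=\begin{pmatrix} \bar a & \hphantom{-}\bar t\\ \bar b & -\bar s\end{pmatrix}
\]
for appropriate $\bar s,\bar t$ built from the cofactor and quotient data, chosen so that $WV=I_2$. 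Verifying $WV=I_2$ reduces to four scalar identities in $\Q$: the $(1,1)$ entry $u\bar a+v\bar b=1$ follows from dividing $ua+vb=g$ on the right by $g$; the $(2,1)$ entry $s\bar a+t\bar b=0$ follows from right-dividing $sa+tb=0$ by $g$; and the two remaining entries are arranged by the choice of $\bar s,\bar t$.

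The main obstacle, and the place demanding genuine care, is noncommutativity: every cancellation must respect the left/right placement of factors, so I cannot simply "divide through" by $g$ as in the commutative case but must justify each step using right divisibility and the fact that $g$ is a common right divisor of $a$ and $b$. In particular, establishing that the cofactors $\bar s,\bar t$ defining the second column of $V$ exist and satisfy the needed identities requires showing that the relevant elements are genuine right multiples of the appropriate quotients — this is exactly where the principal-ideal structure of $\Q$ and the defining universal property of the $\lclm$ do the work. Once these four scalar identities are checked, $WV=I_2$ gives unimodularity of $W$, completing the proof.
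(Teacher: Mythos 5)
Your construction of $W$ and the verification that $W(a,b)^{T}=(g,0)^{T}$ match the paper, as does the first column of your proposed inverse: $u\bar a+v\bar b=1$ and $s\bar a+t\bar b=0$ do follow by right-cancelling $g$ from $ua+vb=g$ and $sa+tb=0$. The gap is in the second column, which you leave as ``arranged by the choice of $\bar s,\bar t$'' --- but that arrangement is the entire content of the unimodularity claim, and you never supply the fact that makes it possible. With $V=\left(\begin{smallmatrix}\bar a&\bar t\\ \bar b&-\bar s\end{smallmatrix}\right)$ the two remaining identities are $s\bar t-t\bar s=1$ and $u\bar t-v\bar s=0$. The first says precisely that $s$ and $t$ are left-coprime, i.e.\ $\gcld(s,t)=1$; this is the key step of the paper's proof, and it is derived from the degree-minimality of the $\lclm$: if $s=hs'$ and $t=ht'$ with $\degD h\geq 1$, then $s'a=-t'b$ would be a common left multiple of $a$ and $b$ of smaller degree, a contradiction. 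Invoking ``the principal-ideal structure'' and ``the universal property of the lclm'' in the vague form you use does not produce this coprimality, and without it the $(2,2)$ entry of $WV$ cannot be made equal to $1$.

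Even granting $\gcld(s,t)=1$, so that there exist $c,d\in\Q$ with $sc+td=1$, the pair $(c,d)$ need not also satisfy $uc+vd=0$, so you cannot place arbitrary cofactors in the second column of $V$ and expect the $(1,2)$ entry to vanish. The paper handles this by first computing
\[
W\begin{pmatrix}\bar a & c\\ \bar b & d\end{pmatrix}
=\begin{pmatrix}1 & uc+vd\\ 0 & 1\end{pmatrix}
\]
and then right-multiplying by the unimodular matrix $\left(\begin{smallmatrix}1 & -uc-vd\\ 0 & 1\end{smallmatrix}\right)$, which replaces the second column $(c,d)$ by $\bigl(\bar a(-uc-vd)+c,\ \bar b(-uc-vd)+d\bigr)$ and clears the top-right entry while preserving the other three. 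That correction step (or an equivalent explicit normalization of the cofactors) is what makes all four scalar identities hold simultaneously; as written, your argument does not close.
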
 
\begin{proof} 
  Let $u,v\in\Q$ be the multipliers from the euclidean algorithm such
  that $ua+vb=g$.  Since $sa=-tb=\lclm(a,b)$, we know that
  $\gcld(s,t)=1$ (otherwise the minimality of the degree of the
  $\lclm$ would be violated). It follows that there exist $c,d\in\Q$
  such that $sc+td=1$.  Now observe that
  \[
  \begin{pmatrix}
    u & v\\
    s & t
  \end{pmatrix}
  \begin{pmatrix}
    ag^{-1}\ & c\\
    bg^{-1} & d
  \end{pmatrix}
  \begin{pmatrix}
    1 & -uc-vd\\
    0 & 1
  \end{pmatrix}
  =
  \begin{pmatrix}
    1 & uc+vd\\
    0 & 1
  \end{pmatrix}
  \begin{pmatrix}
    1 & -uc-vd\\
    0 & 1
  \end{pmatrix}
  = 
  \begin{pmatrix}
    1 & 0\\
    0 & 1
  \end{pmatrix}.
  \]
  Thus
  \[
  W^{-1} = 
  \begin{pmatrix}
      ag^{-1}\ & ag^{-1}(-uc-vd)+c\\
      bg^{-1}\ & bg^{-1}(-uc-vd)+d
    \end{pmatrix}
    = \begin{pmatrix}
      ag^{-1}\ & -a+c\\
      bg^{-1}\ &
      -b+d
    \end{pmatrix},
    \]
    so $W$ is unimodular.
  \qed
\end{proof} 
\begin{definition}[Hermite Normal Form] Let $H\in\Q^{n\times n}$ with
  full row rank. The matrix $H$ is in \emph{Hermite form} if $H$ is
  upper triangular, if every diagonal entry of $H$ is monic, and if
  every off-diagonal entry of $H$ has degree (in $\D$) strictly lower
  than the degree of the diagonal entry below it.
\end{definition} 

\begin{theorem} 
  \label{thm:sqherm}
  Let $A\in\Q^{n\times n}$ have row rank $n$. Then there exists a
  matrix $H\in\Q^\nxn$ with row rank $n$ in Hermite form, and a
  unimodular matrix $U\in\Q^\nxn$, such that $UA=H$.
\end{theorem}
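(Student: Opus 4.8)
The plan is to obtain $H$ and $U$ by a finite sequence of elementary unimodular row operations, first reducing $A$ to upper triangular form and then normalizing the result. Since left multiplication by a unimodular matrix is exactly a composition of such row operations, and since products of unimodular matrices are unimodular (immediate from the definition), it suffices to exhibit unimodular left factors carrying $A$ to a matrix in Hermite form; their accumulated product is the desired $U$. The absence of a determinant is worked around by leaning entirely on the principal-ideal structure of $\Q$, namely the right division algorithm and Theorem~\ref{thm:unimod}.

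First I would triangularize $A$ one column at a time. To clear the first column, I apply Theorem~\ref{thm:unimod} to the entries in rows $1$ and $2$: embedding the resulting $2\times 2$ unimodular matrix $W$ into the identity (so that it acts only on rows $1$ and $2$) replaces $(a_{11},a_{21})$ by $(\gcrd(a_{11},a_{21}),0)$ and leaves all other rows fixed. Repeating with rows $1$ and $i$ for $i=3,\dots,n$ in turn drives every subdiagonal entry of column $1$ to zero, the previously zeroed entries being untouched since each step alters only two rows; this leaves $\gcrd(a_{11},\dots,a_{n1})$ in position $(1,1)$. I then recurse on the trailing $(n-1)\times(n-1)$ submatrix, each of whose unimodular operations embeds into an $\nxn$ unimodular matrix fixing the first row and column. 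After $n$ such stages the matrix $T=U'A$ is upper triangular with $U'$ unimodular.

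Before normalizing I must check that every diagonal entry of $T$ is nonzero, and this is where full row rank is essential: because $U'$ is unimodular it is invertible over $\Q$, so $T$ has the same (full) row rank as $A$, and an upper triangular $\nxn$ matrix over the Ore domain $\Q$ has full row rank precisely when no diagonal entry vanishes (passing to the skew field of fractions reduces this to the usual triangular rank statement). With the diagonal entries nonzero, I scale row $i$ on the left by $\lc(T_{ii})^{-1}\in\F(t)\nonzero$, a unit of $\Q$, to make each diagonal entry monic; this is a unimodular diagonal operation. Finally I reduce the off-diagonal entries: for $i<j$ I right-divide $T_{ij}$ by the monic $T_{jj}$ to get $T_{ij}=qT_{jj}+r$ with $\degD r<\degD T_{jj}$, and subtract $q$ times row $j$ from row $i$, another elementary unimodular operation that replaces $T_{ij}$ by $r$. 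Performing these reductions row by row from the bottom up, and within each row from left to right, ensures that an entry once reduced is not subsequently disturbed, since each reduction using row $j$ alters only columns $\ge j$. The result is in Hermite form.

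The hard part is the bookkeeping that makes each phase genuinely a left unimodular transformation in the non-commutative setting, together with the rank argument guaranteeing nonzero diagonal entries; once those are in place the construction is a direct analogue of the classical Hermite computation over $\F[x]$, with right division (because coefficients are written on the left) and Theorem~\ref{thm:unimod} standing in for the scalar B\'ezout step.
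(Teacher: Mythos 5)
Your proof is correct and follows essentially the same route as the paper's: both triangularize $A$ by repeated application of Theorem~\ref{thm:unimod} and then reduce the off-diagonal entries modulo the monic diagonal entries. The only organizational difference is that you eliminate the first column and recurse on the trailing block, whereas the paper inducts on the leading principal $(n-1)\times(n-1)$ minor and then clears the last row against its diagonal; your explicit rank argument for the nonvanishing of the diagonal entries and the monic normalization merely fills in details the paper leaves implicit.
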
 
\begin{proof}
  We show this induction on $n$. The base case, $n=1$, is trivial and
  we suppose that the theorem holds for $n-1\times n-1$ matrices.
  Since $A$ has row rank $n$, we can find a permutation of the rows of $A$
  such that every principal minor of $A$ has full row rank.  Since
  this permutation is a unimodular transformation of $A$, we assume
  this property about $A$.  Thus, by the induction hypothesis, there
  exists a unimodular matrix $U_1\in\Q^{(n-1)\times (n-1)}$ such that
  \[ 
  \begin{pmatrix}
    &  &  &  & 0\\
    & U_1 &  &  & 0\\
    &  &  &  & \vdots\\
    &  &  &  & 0\\
    0 & 0 & \cdots & 0 & 1
  \end{pmatrix}
  \cdot A 
  = 
  \Hbar=
  \begin{pmatrix}
    \Hbar_{1,1} & \cdots & \cdots & * & * \\
    & \Hbar_{2,2} & \cdots  & * & *\\
\kern-25pt\raise8pt\hbox to 0pt{\vbox to 0pt{\huge 0}}
&  &  \ddots  & \vdots & \vdots\\
    &  &  &  \Hbar_{n-1,n-1} & *\\
    A_{n,1} & A_{n,2} & \cdots & A_{n,n-1} & A_{n,n}
  \end{pmatrix}
  \in\Q^\nxn,
  \] 
  where the $(n-1)$st principal minor of $\Hbar$ is in Hermite form. By
  Theorem~\ref{thm:unimod}, we know that there exists a unimodular
  matrix
  \[
  W=\begin{pmatrix}
      u_i & v_i\\
      s_i & -t_i
  \end{pmatrix}
  \in\Q^{2\times 2}
  ~~\mbox{such that}~~
  W\begin{pmatrix}
      \Hbar_{ii}\\
      A_{n,i}
    \end{pmatrix}
    =
  \begin{pmatrix}
    g_i\\
    0
  \end{pmatrix}\in\Q^{2\times 1}.
  \]
  This allows us to reduce $A_{n,1},\ldots,A_{n,n-1}$ to zero, and
  does not introduce any non-zero entries below the diagonal.  Also,
  all off-diagonal entries can be reduced using unimodular operations
  modulo the diagonal entry, putting the matrix into Hermite form.
  \qed
\end{proof}

\begin{corollary} 
  \label{cor:hermuniq}
  Let $A\in\Q^\nxn$ have full row rank.  Suppose $UA=H$ for unimodular
  $U\in\Q^\nxn$ and Hermite form $H\in\Q^\nxn$.  Then both $U$ and $H$
  are unique.
\end{corollary}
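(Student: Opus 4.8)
The plan is to reduce the statement to a comparison of two Hermite forms and then to show that the only upper-triangular unimodular matrix relating them is the identity. Suppose $U_1A=H_1$ and $U_2A=H_2$ with $U_1,U_2$ unimodular and $H_1,H_2$ in Hermite form. Since unimodular matrices are closed under products and inverses (the preceding lemma gives that each $U_\ell^{-1}\in\Q^\nxn$), the matrix $P:=U_1U_2^{-1}\in\Q^\nxn$ is unimodular and satisfies $H_1=PH_2$, while $P^{-1}H_1=H_2$. It then suffices to prove $P=I_n$, for then $H_1=H_2$ and $U_1=U_2$, giving both uniqueness claims at once.

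First I would show that $P$, and symmetrically $P^{-1}$, is upper triangular. Because $A$ has full row rank and each $U_\ell$ is invertible, each $H_\ell$ has full row rank, so every diagonal entry $(H_\ell)_{jj}$ is nonzero. Reading $H_1=PH_2$ column by column and inducting on the column index $j$: for a row index $i>j$, upper-triangularity of $H_2$ gives $(H_1)_{ij}=\sum_{k\leq j}P_{ik}(H_2)_{kj}$, the terms with $k<j$ vanish by the induction hypothesis (there $i>k$), and $(H_1)_{ij}=0$ by upper-triangularity of $H_1$; since $\Q$ is a domain and $(H_2)_{jj}\neq0$, this forces $P_{ij}=0$. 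Hence $P$ is upper triangular, and the same argument applied to $P^{-1}H_1=H_2$ shows $P^{-1}$ is upper triangular as well.

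Next I would pin down the diagonal and then the strictly upper part. The diagonal of a product of upper-triangular matrices is the product of diagonals, so $PP^{-1}=I_n$ yields $P_{ii}(P^{-1})_{ii}=1$; thus each $P_{ii}$ is a unit of $\Q$, and since $\degD$ is additive the units of $\Q$ are exactly $\F(t)\nonzero$, whence $\degD P_{ii}=0$. Comparing $(i,i)$ entries of $H_1=PH_2$ gives $(H_1)_{ii}=P_{ii}(H_2)_{ii}$ with both diagonal entries monic of equal $\D$-degree and $P_{ii}\in\F(t)$, so taking leading coefficients forces $P_{ii}=1$. To kill the remaining entries I would induct on $j-i$: comparing the $(i,j)$ entry of $H_1=PH_2$, using $P_{ii}=1$ and the vanishing of the closer super-diagonals, gives $(H_1)_{ij}=(H_2)_{ij}+P_{ij}(H_2)_{jj}$; the Hermite degree condition makes both $(H_1)_{ij}$ and $(H_2)_{ij}$ have $\D$-degree below $\degD(H_2)_{jj}$, so $P_{ij}(H_2)_{jj}$ does too, which is impossible unless $P_{ij}=0$. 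Therefore $P=I_n$.

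The main obstacle is the absence of a usable determinant: every cancellation that is automatic in the commutative theory must instead be extracted from three structural facts, namely that $\Q$ is a domain (permitting right-cancellation of the nonzero diagonal entries), that $\degD$ is additive so that the units of $\Q$ are precisely $\F(t)\nonzero$, and that left multiplication by an element of $\F(t)$ acts transparently on leading coefficients. Non-commutativity further demands care that all products be kept in the correct left-coefficient order, but because the only multipliers $P_{ii}$ that arise are scalars in $\F(t)$, the leading-coefficient bookkeeping in the degree arguments remains straightforward.
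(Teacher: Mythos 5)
Your proof is correct and follows essentially the same route as the paper: form the transition matrix $P=U_1U_2^{-1}$, show it is upper triangular with diagonal entries equal to $1$, and then use the Hermite degree condition on off-diagonal entries to force $P=I_n$. The only difference is that you supply the details (domain cancellation for triangularity, the unit argument for the diagonal) that the paper merely asserts.
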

\begin{proof}
  Suppose $H$ and $G$ are both Hermite forms of $A$. Thus, there exist
  unimodular matrices $U$ and $V$ such that $UA=H$ and $VA=G$, and
  $G=WH$ where $W=VU^{-1}$ is unimodular.  Since $G$ and $H$ are upper
  triangular matrices, we know $W$ is as well.  Moreover, since $G$
  and $H$ have monic diagonal entries, the diagonal entries of $W$
  equal $1$. We now prove $W$ is the identity matrix. By way of
  contradiction, first assume that $W$ is not the identity, so there
  exists an entry $W_{ij}$ which is the first nonzero off-diagonal
  entry on the $i$th row of $W$. Since $i<j$ and since $W_{ii}=1$,
  $G_{ij}=H_{ij}+W_{ij}H_{jj}$. Because $W_{ij}\neq 0$, we see $\degD
  G_{ij}\geq\degD G_{jj}$, which contradicts the definition of the
  Hermite form.  The uniqueness of $U$ follows similarly.  \qed
\end{proof}

\begin{definition}[Row Degree] A matrix $T\in\Q^{n\times n}$ has
  row degree $\uvec\in (\NN\cup \{-\infty\})^n$ if the $i$th row of
  $T$ has degree $u_i$.  We write $\rowdeg \uvec$.
\end{definition} 
\begin{definition}[Leading Row Coefficient Matrix]  
  Let $T\in\Q^{n\times n}$ have $\rowdeg \uvec$. Set $N=\degD T$ and
  $S=\diag(\D ^{N-u_1},\ldots,\D ^{N-u_n})$.  We write
  \[
  ST=L\D ^N+\text{lower degree terms in $\D$},
  \] 
  where the matrix $L=LC_{row}(T)\in\F(t)^\nxn$ is called the \emph{leading
  row coefficient matrix} of $T$.
\end{definition} 
\begin{definition}[Row-reduced Form]  A matrix $T\in\Q^{m\times
    s}$ with rank $r$ is in row-reduced form if $rank$ $LC_{row}(T)=r$.
\end{definition} 

\begin{fact}[\cite{BeckermannChengLabahn:2006} Theorem~2.2]
  \label{fact:rowred}
  For any $A\in\Q^{m\times s}$ there exists a unimodular matrix
  $U\in\Q^{m\times m}$, with $T=UA$ having $r\leq\min\{ m,s\} $
  nonzero rows, $\rowdeg T\leq\rowdeg A$, and where the submatrix
  consisting of the r nonzero rows of T are row-reduced. Moreover, the
  unimodular multiplier satisfies the degree bound
  \[ 
  \rowdeg U\leq\vvec+(|\uvec|-|\vvec|-{\min_j}\{ u_j\}
  )\overrightarrow{e},
  \]
  where $\uvec:=\max(\zerovec,\rowdeg A)$, $\vvec:=\max(\zerovec,\rowdeg T)$,
  and $\overrightarrow{e}$ is the column vector with all entries equal to 1.
\end{fact}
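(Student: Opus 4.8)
The plan is to prove the statement constructively, by the classical leading--coefficient row reduction adapted to the Ore ring $\Q$, and then to extract the degree bound on $U$ by bookkeeping over the elementary steps. Throughout I write $M_i$ for the $i$th row of a matrix $M$, and set $u_i=\degD T_i$ for the current row degrees of $T$.

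First I would establish existence of the row--reduced $T$. Starting from $A$, I maintain a unimodular $U$ (initially the identity) together with $T=UA$. At each stage I form the leading row coefficient matrix $L=LC_{row}(T)$ of the currently nonzero rows. If $L$ has full row rank we stop. Otherwise there is a nonzero row vector $\overrightarrow{c}\in\F(t)^{1\times m}$ with $\overrightarrow{c}L=0$; I choose an index $i$ with $c_i\neq 0$ whose row degree $u_i$ is maximal among $\{\,j:c_j\neq 0\,\}$, and replace the $i$th row of $T$ by $\sum_j c_j\D^{\,u_i-u_j}T_j$, performing the identical operation on $U$. This is unimodular because $c_i\in\F(t)\nonzero$ is a unit and every exponent $u_i-u_j\geq 0$. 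Since left multiplication by $\D^k$ preserves the leading coefficient in $\D$, the coefficient of $\D^{u_i}$ in the new row equals $\overrightarrow{c}L=0$, so the degree of row $i$ strictly decreases (possibly to $-\infty$, i.e. the row vanishes).

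Termination and the side conditions come next. The sum $|\vvec|$ of the nonnegative row degrees of the current $T$ strictly decreases at each step and is bounded below, so the procedure halts; at that point $LC_{row}$ of the nonzero rows has full rank, which is exactly row--reducedness. Each step lowers a single row degree and leaves the others fixed, so $\rowdeg T\leq\rowdeg A$ holds throughout; and since the terminal leading coefficient matrix of the $r$ nonzero rows is an $r\times s$ matrix of rank $r$ we get $r\leq s$, while $r\leq m$ is immediate, giving $r\leq\min\{m,s\}$.

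The hard part will be the degree bound on $U$. Each elementary step sends the $i$th row of $U$ to $\sum_j c_j\D^{\,u_i-u_j}U_j$, whose degree is at most the current maximal row degree of $U$ plus $u_i-\min_j u_j$; and over the whole run the total degree removed from $T$ is exactly $|\uvec|-|\vvec|$. I would argue by a telescoping/amortized accounting that no row of $U$ accumulates more than $|\uvec|-|\vvec|-\min_j\{u_j\}$ in excess of its final row degree $v_k$, which is precisely the claimed bound $\rowdeg U\leq\vvec+(|\uvec|-|\vvec|-\min_j\{u_j\})\overrightarrow{e}$. The obstacle is tightness: the naive estimate (maximal row degree times number of steps) is far too weak, so one must exploit that the row being reduced is exactly the one carrying the largest shift $\D^{\,u_i-u_j}$, and that once a row degree has dropped it can only supply smaller shifts thereafter. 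A cleaner alternative would characterize each row $U_k$ as a minimal--degree solution of $U_kA=T_k$ and invoke a predictable--degree argument; the snag there is that $A$ need not be row--reduced, so the predictable degree property is unavailable directly and must be replaced by a bound that pays exactly the defect $|\uvec|-|\vvec|$.
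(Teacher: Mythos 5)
The statement you are proving is not proved in the paper at all: it is imported verbatim as Theorem~2.2 of \cite{BeckermannChengLabahn:2006} and stated as a Fact, so there is no in-paper argument to compare against. Judged on its own terms, the first half of your proposal (existence of the row-reduced form) is essentially correct and standard: the elementary transformation that replaces row $i$ by $\sum_j c_j\D^{u_i-u_j}T_j$ for a left nullvector $\overrightarrow{c}$ of $LC_{row}(T)$, with $i$ chosen of maximal degree among the indices where $c_j\neq 0$, is indeed unimodular and kills the coefficient of $\D^{u_i}$ because left multiplication by $\D^k$ preserves the leading coefficient in this Ore ring. (Your termination measure $|\vvec|$ has a small defect: when a row of degree $0$ is reduced to the zero row, $|\vvec|=\sum_j\max(0,u_j)$ does not decrease; use $\sum_{j:\,T_j\neq 0}(u_j+1)$ or a lexicographic argument instead. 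Also $\rowdeg T\leq\rowdeg A$ needs the convention that a vanished row has degree $-\infty$, which you do adopt.)

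The genuine gap is the degree bound on $U$, which is precisely the part of the Fact that the paper actually uses (it drives Corollary~\ref{cor:degU} and hence all subsequent complexity bounds). You do not prove it: you state what you \emph{would} argue by a ``telescoping/amortized accounting,'' concede that the naive per-step estimate is far too weak, and then note that the alternative route via a predictable-degree property fails because $A$ is not row-reduced. Neither route is carried out, and the specific bound $\rowdeg U\leq\vvec+(|\uvec|-|\vvec|-\min_j\{u_j\})\overrightarrow{e}$ is sharp enough that it cannot be waved through: a single reduction step sends $\degD U_i$ to at most $u_i+\max_j(\degD U_j-u_j)$, and controlling how this quantity compounds over all steps so that each row of $U$ exceeds its target $v_k$ by no more than the \emph{global} defect $|\uvec|-|\vvec|-\min_j\{u_j\}$ requires a concrete invariant maintained across iterations (this is the content of the proof in \cite{BeckermannChengLabahn:2006}, which tracks degree bounds on the multiplier throughout the reduction). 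To close the gap you would need to exhibit and verify such an invariant, e.g.\ a per-step bound of the form $\degD U_{kj}\leq u_j - \text{(current degree of row $k$ of $T$)} + \text{(accumulated defect)}$, and show it telescopes to the stated inequality; as written, the claim is asserted rather than established.
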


\noindent The proof of the following is left to the reader.
\begin{corollary}
  \label{cor:rrid}
  If $A\in\Q^\nxn$ is a unimodular matrix then the row reduced form of
  $A$ is an identity matrix.
\end{corollary}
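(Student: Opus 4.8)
The plan is to combine Fact~\ref{fact:rowred} with a predictable-degree argument that forces every row degree of the row-reduced form to vanish. Since $A$ is unimodular it has full row rank $n$, so Fact~\ref{fact:rowred} produces a unimodular $U$ with $T=UA$ row-reduced and with no zero rows. Being a product of the unimodular matrices $U$ and $A$, the matrix $T$ is itself unimodular, and hence (by the earlier Lemma) has a two-sided inverse $T^{-1}\in\Q^\nxn$. I would write $\uvec=\rowdeg T$, so that each $u_i\geq 0$, and set $L=LC_{row}(T)\in\F(t)^\nxn$, which is nonsingular precisely because $T$ is row-reduced.

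The technical heart is that a full-rank row-reduced matrix over $\Q$ has the \emph{predictable degree property}: for every row vector $\overrightarrow{p}=(p_1,\dots,p_n)$ over $\Q$ one has $\degD(\overrightarrow{p}\,T)=\max_i(\degD p_i+u_i)$. Here I would use that we are in the differential, rather than a general Ore, setting, so that $\D$ acts through the trivial automorphism and leading coefficients are multiplicative in $\D$, i.e.\ $\lc_\D(fg)=\lc_\D(f)\,\lc_\D(g)$. Setting $N=\max_i(\degD p_i+u_i)$, the coefficient of $\D^N$ in $\overrightarrow{p}\,T$ equals $\overrightarrow{q}\,L$, where $q_i=\lc_\D(p_i)$ when $\degD p_i+u_i=N$ and $q_i=0$ otherwise; since $\overrightarrow{q}\neq\zerovec$ and $L$ is nonsingular, this is nonzero, establishing the claimed degree.

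I would then apply this to each row of $T^{-1}$. Writing $\overrightarrow{p}^{(k)}$ for the $k$th row of $T^{-1}$, the identity $T^{-1}T=I_n$ gives that $\overrightarrow{p}^{(k)}T$ is the $k$th row of $I_n$, a vector of degree $0$, so $\degD p^{(k)}_i+u_i\leq 0$ for every $i$. Because $T^{-1}$ is invertible it has no zero column (a vanishing $i$th column of $T^{-1}$ would force the $i$th column of $TT^{-1}=I_n$ to vanish), so for each $i$ there is some $k$ with $p^{(k)}_i\neq 0$, hence $\degD p^{(k)}_i\geq 0$; together with $u_i\geq 0$ this forces $u_i=0$. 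Thus $\rowdeg T=\zerovec$, so $T$ lies in $\F(t)^\nxn$ and is nonsingular over the field $\F(t)$; normalizing its (constant) leading row coefficient matrix to the identity by a further constant unimodular multiplier identifies the row-reduced form of $A$ with $I_n$.

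The step I expect to be the main obstacle is upgrading ``some $u_i=0$'' to ``all $u_i=0$'': the predictable degree identity by itself only guarantees one index attains the maximum, and it is the nonvanishing of \emph{every} column of $T^{-1}$, a consequence of genuine two-sided invertibility, that yields $u_i=0$ for all $i$. A secondary point I would state carefully is the multiplicativity $\lc_\D(fg)=\lc_\D(f)\,\lc_\D(g)$ underlying the predictable degree property: it holds here because the derivation $\delta$ leaves the $\D$-leading term unshifted, whereas in a general Ore ring one would have to track the automorphism acting on leading coefficients.
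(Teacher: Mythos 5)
Your argument is correct; note that the paper explicitly leaves this corollary to the reader, so there is no in-paper proof to compare against. Your route --- establishing the predictable degree property for a row-reduced $T$ over $\Q$ (which does hold here because the leading coefficient in $\D$ is multiplicative, $\lc(fg)=\lc(f)\lc(g)$, the automorphism being trivial in the differential case), then applying it to the rows of $T^{-1}$ and using that $T^{-1}$ has no zero column to force every $u_i=0$ --- is the standard one, and each step checks out: $T=UA$ is unimodular as a product of unimodular matrices, $u_i\geq 0$ since $T$ has no zero rows, and $\degD p^{(k)}_i\geq 0$ whenever $p^{(k)}_i\neq 0$, so $\degD p^{(k)}_i+u_i\leq 0$ pins both to zero. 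There is a one-line alternative the authors may have had in mind: $A^{-1}A=I_n$ exhibits $I_n$ directly as a row-reduced form of $A$. However, that shortcut only shows that \emph{some} row-reduced form of $A$ is $I_n$, whereas the way the corollary is used in Corollary~\ref{cor:degU} (to conclude $\vvec=\rowdeg T=\zerovec$ for the particular $T$ produced by Fact~\ref{fact:rowred}) really requires knowing that \emph{every} row-reduced form of a unimodular matrix has all row degrees zero --- which is exactly what your predictable-degree argument delivers. So your longer proof is not wasted effort; it supplies the row-degree invariance that the trivial argument omits. The only cosmetic caveat is the final normalization: a row-reduced form of $A$ is a priori only a nonsingular matrix in $\F(t)^\nxn$, and you correctly note the extra constant unimodular multiplier needed to make it literally $I_n$.
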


The following theorems provide degree bounds on $H$ and $U$.  We first
compute a degree bound of the inverse of $U$ by using the idea of
backward substitution, and then use the result of
\cite{BeckermannChengLabahn:2006} to compute degree bound of $U$.

\begin{theorem}
  \label{thm:hermunideg}
  Let $A\in \Q^\nxn$ be a matrix with $\degD A_{ij}\leq d$ and
  full row rank. Suppose $UA=H$ for unimodular matrix $U\in \Q^\nxn$ and
  $H\in \Q^\nxn$ in Hermite form. Then there exist a unimodular
  matrix $V\in\Q^\nxn$ such that $A=VH$ where $UV=I_n$ and $\degD
  V_{ij}\leq d$.
\end{theorem}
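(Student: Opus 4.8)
The plan is to prove the theorem by a substitution argument that exploits the triangular shape of $H$, reading off the degree of each entry of $V:=U^{-1}$ column by column. First I would note that $V$ exists and lies in $\Q^\nxn$: by definition of unimodularity there is a matrix with $UV=I_n$, and by the preceding lemma this same $V$ satisfies $VU=I_n$, so $V$ is itself unimodular. Multiplying $UA=H$ on the left by $V$ then gives $A=VH$, and it remains only to bound $\degD V_{ij}\le d$. I would fix a row index $i$ and induct on the column index $j$, using the crucial observation that because $H$ is upper triangular the $(i,j)$ entry of $VH=A$ reads
\[
A_{ij}=\sum_{k\le j}V_{ik}H_{kj},\qquad\text{i.e.}\qquad V_{ij}H_{jj}=A_{ij}-\sum_{k<j}V_{ik}H_{kj},
\]
an equation involving only the entries $V_{i1},\dots,V_{ij}$ of row $i$.

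For the base case $j=1$ the relation collapses to $V_{i1}H_{11}=A_{i1}$. Since $\Q$ is a domain in which the degree in $\D$ is additive, $\degD(V_{i1}H_{11})=\degD V_{i1}+\degD H_{11}$, and as $H_{11}$ is monic and $\degD A_{i1}\le d$ this yields $\degD V_{i1}=\degD A_{i1}-\degD H_{11}\le d$ (with the convention that $V_{i1}=0$, hence degree $-\infty$, when $A_{i1}=0$).

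For the inductive step, suppose $\degD V_{ik}\le d$ for all $k<j$, and write $h_j:=\degD H_{jj}$. When $h_j=0$ the Hermite normalization forces $H_{kj}=0$ for every $k<j$ (an above-diagonal entry would need negative degree), so $V_{ij}H_{jj}=A_{ij}$ and $\degD V_{ij}=\degD A_{ij}\le d$. When $h_j\ge 1$, I would use the defining degree inequality of the Hermite form, $\degD H_{kj}<h_j$ for $k<j$, together with additivity of the degree and the induction hypothesis, to bound each summand by
\[
\degD(V_{ik}H_{kj})=\degD V_{ik}+\degD H_{kj}\le d+(h_j-1).
\]
Since also $\degD A_{ij}\le d\le d+h_j-1$, the right-hand side of the substitution relation has degree at most $d+h_j-1$, whence $\degD V_{ij}+h_j=\degD(V_{ij}H_{jj})\le d+h_j-1$ and therefore $\degD V_{ij}\le d-1\le d$. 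This closes the induction and, ranging over all rows $i$, proves $\degD V_{ij}\le d$ for every entry.

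The hard part is not the algebra, which is a short induction, but rather keeping the non-commutative bookkeeping honest: I must respect the left-to-right order $V_{ik}H_{kj}$ throughout (right multiplication by $H$), and I must justify that the degree in $\D$ is exactly additive on products in $\Q$ — which holds because $\Q$ is a domain, as recalled in Section~2. The only other points requiring a little care are the degenerate cases, namely zero entries (handled uniformly by the $-\infty$ convention) and diagonal blocks $H_{jj}$ of degree zero, both disposed of above. Notably the argument never invokes a determinant, sidestepping exactly the obstruction flagged at the start of this section.
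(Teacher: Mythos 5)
Your proof is correct and follows essentially the same strategy as the paper's: back-substitution through the upper-triangular system $A=VH$, inducting on the column index and using additivity of $\degD$ together with the Hermite degree condition $\degD H_{kj}<\degD H_{jj}$. The only difference is organizational --- you bound the degree of the right-hand side of $V_{ij}H_{jj}=A_{ij}-\sum_{k<j}V_{ik}H_{kj}$ directly, whereas the paper splits into cases according to whether $\degD V_{i,k+1}$ exceeds the degrees of the earlier entries in its row; your version is arguably the cleaner write-up of the same argument.
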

\begin{proof}
  We prove by induction on $n$. The base case is $n=1$. Since 
  $H_{11}=\gcrd(A_{11},\ldots,A_{n1})$, $\degD H_{11}\leq d$ and so 
  $\degD V_{i1}\leq d$ for $1\leq i\leq n$. Now, we suppose that our 
  claim is true for $k$ where $1<k<n$. Then we have to show that 
  $\degD V_{ik+1}\leq d$. We need to consider two cases:
  
  \noindent Case 1: $\degD V_{i,k+1}>\max(\degD V_{i1},\ldots,\degD V_{ik})$.
  Since
  \begin{align*}
  \degD H_{k+1,k+1} & \geq\max(\degD ~H_{1,k+1},\ldots,\degD H_{k,k+1}),\\
  \degD A_{i,k+1}&=\degD (V_{i,k+1}H_{k+1,k+1}),
  \end{align*}
  where $A_{i,k+1}=V_{i1}H_{1,k+1}+\cdots+V_{i,k+1}H_{k+1,k+1}$.  Thus, $\degD
  V_{i,k+1}\leq d$.

  \noindent Case 2: $\degD V_{i,k+1}\leq\max(\degD V_{i1},\ldots,\degD V_{ik})$.
  Thus, by induction hypothesis, $\degD V_{i,k+1}\leq d$.
  \qed
\end{proof}

\begin{corollary}
  \label{cor:degU}
  Let $A$, $V$, and $U$ be those in
  Theorem~\ref{thm:hermunideg}. Then $\degD U_{ij}\leq(n-1)d$.
\end{corollary}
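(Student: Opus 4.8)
The plan is to recognize $U$ as the unimodular matrix that row-reduces $V=U^{-1}$, and then read the degree bound directly off the multiplier bound in Fact~\ref{fact:rowred}. Theorem~\ref{thm:hermunideg} already gives $\degD V_{ij}\le d$, and since no determinant is available to invert $V$ via an adjugate/Cramer argument, this indirect route through row reduction is the natural way to control the entries of the inverse $U$.

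First I would apply Fact~\ref{fact:rowred} to the unimodular matrix $V$. This produces a unimodular $U'\in\Q^\nxn$ with $T=U'V$ row-reduced and $\rowdeg T\le\rowdeg V$. Because $V$ is unimodular, Corollary~\ref{cor:rrid} forces its row-reduced form to be the identity, so $T=I_n$ and hence $U'V=I_n$. By uniqueness of the inverse (together with the symmetry lemma that $U'V=I_n$ implies $VU'=I_n$), we get $U'=U$, so the degree bound of Fact~\ref{fact:rowred} applies verbatim to $U$ itself.

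It then remains to evaluate that bound. Writing $\uvec=\max(\zerovec,\rowdeg V)$ and $\vvec=\max(\zerovec,\rowdeg T)$, I note that $V$ has no zero row (being invertible) and $\degD V_{ij}\le d$, so each component $u_i$ of $\rowdeg V$ satisfies $0\le u_i\le d$ and $\uvec=\rowdeg V$. Since $T=I_n$ has $\rowdeg T=\zerovec$, we get $\vvec=\zerovec$ and $|\vvec|=0$. Substituting into Fact~\ref{fact:rowred},
\[
\rowdeg U \le \bigl(|\uvec|-{\min_j}\{u_j\}\bigr)\overrightarrow{e}.
\]

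The final step is the one to watch. I claim $|\uvec|-\min_j\{u_j\}\le (n-1)d$, because $|\uvec|-\min_j\{u_j\}=\sum_{i\ne k}u_i$, where $k$ is an index achieving the minimum, so the right-hand side is a sum of only $n-1$ of the row degrees, each at most $d$. Hence $\rowdeg U\le (n-1)d\cdot\overrightarrow{e}$, which gives $\degD U_{ij}\le (n-1)d$ for all $i,j$. The main obstacle is thus not a deep computation but the bookkeeping that sharpens the naive estimate $|\uvec|\le nd$ to $(n-1)d$ by exploiting the subtracted minimum, while checking that the $\max$ with $\zerovec$ in the definitions of $\uvec$ and $\vvec$ is harmless here because all row degrees of $V$ are nonnegative.
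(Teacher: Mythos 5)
Your proposal is correct and follows essentially the same route as the paper: apply Fact~\ref{fact:rowred} to the unimodular matrix $V$, use Corollary~\ref{cor:rrid} to identify its row-reduced form as $I_n$, and bound $|\uvec|-\min_j\{u_j\}$ by $(n-1)d$ since each row degree of $V$ is at most $d$. Your explicit justification that the multiplier from the Fact coincides with $U$, and your expansion of $|\uvec|-\min_j\{u_j\}$ as a sum of $n-1$ row degrees, merely spell out steps the paper leaves implicit.
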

\begin{proof}
  By Corollary~\ref{cor:rrid}, we know that the row reduced form of
  $V$ is $I_n$.  Moreover, since $I_n=UV$, we can compute the degree
  bound of $U$ by using Fact~\ref{fact:rowred}. Clearly,
  \[
  \vvec+(|\uvec|-|\vvec|- \min_j\{ u_j\}
  )\overrightarrow{e}\leq\vvec+(|\uvec|- \min_j\{ u_j\} )\overrightarrow{e},
  \] 
  where $\uvec:=\max(\zerovec,\rowdeg V)$ and
  $\vvec:=\max(\zerovec,\rowdeg I_n)=\zerovec$. Since the degree of
  each row of $V$ is bounded by $d$, $(|\uvec|- \min_j\{ u_j\} )\leq
  (n-1)d$.  Then, by Fact~\ref{fact:rowred}, $\rowdeg U\leq
  (n-1)d$. Therefore, $\degD U_{ij}\leq(n-1)d$.
  \qed
\end{proof}
\begin{corollary}
  \label{cor:Hnd}
  Let $H$ be same as that in Theorem~\ref{thm:hermunideg}. Then $\degD
  H_{ij}\leq nd$.
\end{corollary}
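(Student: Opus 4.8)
The plan is to read off the bound directly from the factorization $H=UA$, using the degree bound on $U$ already obtained in Corollary~\ref{cor:degU}. Since $H=UA$ is a product of matrices over $\Q$, each entry satisfies $H_{ij}=\sum_{k=1}^n U_{ik}A_{kj}$, so it suffices to bound the degree in $\D$ of this sum of products.

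First I would bound each summand. Because the degree function in $\D$ is multiplicative, $\degD(U_{ik}A_{kj})=\degD U_{ik}+\degD A_{kj}$ (with the convention that a zero summand contributes $-\infty$). Corollary~\ref{cor:degU} gives $\degD U_{ik}\leq(n-1)d$, and the hypothesis carried over from Theorem~\ref{thm:hermunideg} gives $\degD A_{kj}\leq d$, so each product has degree at most $(n-1)d+d=nd$. Next I would apply subadditivity of the degree under addition, $\degD(f+g)\leq\max\{\degD f,\degD g\}$, to conclude that $\degD H_{ij}\leq nd$, as claimed.

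I do not expect a genuine obstacle here: the corollary is essentially a one-line consequence once the degree bound on $U$ is in hand, and all the needed properties of $\degD$ were recorded in Section~2. The only point worth flagging is that the argument must route through $U$ rather than through the companion factorization $A=VH$ of Theorem~\ref{thm:hermunideg}: bounding $\degD H_{ij}$ from $A=VH$ would require controlling possible cancellation in $A_{ij}=\sum_k V_{ik}H_{kj}$, which yields no immediate upper bound on $H$. Working instead from $H=UA$, where $H$ appears alone on the left-hand side, sidesteps this difficulty and makes the bound immediate.
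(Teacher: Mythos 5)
Your proof is correct and follows exactly the paper's argument: the paper likewise deduces the bound from $H=UA$ together with $\degD U_{ij}\leq(n-1)d$ (Corollary~\ref{cor:degU}) and $\degD A_{ij}\leq d$. Your version merely spells out the entrywise sum-of-products and the degree properties that the paper leaves implicit.
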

\begin{proof}
  Since $\degD U_{ij}\leq(n-1)d$ and $\degD A_{ij}\leq d$, $\degD
  H_{ij}\leq nd$.
  \qed
\end{proof}

\section{Computing Hermite forms by linear systems over $\F(t)$}

In this section we present our polynomial-time algorithm to compute
the Hermite form of a matrix over $\Q$.  We exhibit a variant of the
linear system method developed in \cite{KalKri87} and
\cite{storjohann:ms1994}. The approach of these papers is to reduce
the problem of computing the Hermite of matrices with (usual)
polynomial entries in $\F[z]$ to the problem of solving a linear
system equations over $\F$.  Analogously, we reduce the problem of
computing the Hermite form over $\S$ to solving linear systems over
$\F(t)$.  The point is that the field $\F(t)$ over which we solve is
the usual, commutative, field of rational functions.

For convenience, we assume that our matrix is over $\S$ instead of
$\Q$, which can easily be achieved by clearing denominators with a
``scalar'' multiple from $\F[t]$.  This is clearly a unimodular
operation in the class of matrices over $\Q$.

We first consider formulating the computation of the Hermite form a
matrix over $\Q$ as the solution of a ``pseudo''-linear system over
$\Q$ (i.e., a matrix equation over the non-commutative ring $\Q$).

\vspace*{-3pt}

\begin{theorem}
  \label{thm:hermsys}
  Let $A\in \S^\nxn$ have full row rank, with $\degD A_{i,j}\leq
  d$, and $(d_1,\ldots,d_n)\in\NN^n$ be given. Consider the system of
  equations $P A= G$, for $n\times n$ matrices for $P,G\in\Q$
  restricted as follows:

  \vspace*{-3pt}
  \begin{itemize}
  \item The degree (in $\D$) of each entry of $P$ is bounded by
    $(n-1)d+\max_{1\leq i\leq n} d_i$.
  \item The matrix $G$ is upper triangular, where every diagonal entry
    is monic and the degree of each off-diagonal entry is less than
    the degree of the diagonal entry below it.
  \item The degree of the $i$th diagonal entry of $G$ is $d_i$.
  \end{itemize}
  \vspace*{-2pt}
  Let $H$ be the Hermite form of $A$ and $(h_1,\ldots,h_n)\in\NN^n$ be
  the degrees of the diagonal entries of $H$. Then the following are
  true:
  \vspace*{-1pt}
  \begin{enumerate} 
  \item[(a)] There exists at least one pair $P,G$ as above with $PA=G$
    if and only if $d_i\geq h_i$ for $1\leq i\leq n$.
  \item[(b)] If $d_i=h_i$ for $1\leq i\leq n$ then $G$ is the Hermite
    form of $A$ and $P$ is a unimodular matrix.
  \end{enumerate}
\end{theorem}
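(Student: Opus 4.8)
The plan is to compare any candidate pair $P,G$ against the genuine Hermite data. By Theorem~\ref{thm:sqherm} fix a unimodular $U$ with $UA=H$, write $V=U^{-1}$ so that $A=VH$, and recall from Corollary~\ref{cor:degU} that $\degD U_{ij}\le(n-1)d$. Set $D=\max_{1\le i\le n}d_i$. The linchpin of the whole argument is the observation that, for \emph{any} $P$ with $PA=G$, the matrix $M:=PV=PU^{-1}\in\Q^\nxn$ is upper triangular. Indeed $G=PA=PVH=MH$, and since $H$ is upper triangular with monic (hence nonzero) diagonal entries, it is invertible over the skew field of fractions $Q$ of $\Q$, with an upper triangular inverse; thus $M=GH^{-1}$ is upper triangular over $Q$, and since its entries already lie in $\Q$ it is upper triangular over $\Q$. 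I would isolate this as a preliminary step, as everything else flows from it.

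Granting this, both parts reduce to reading off the diagonal of $M$. Comparing the $(i,i)$ entries of $G=MH$ gives $G_{ii}=M_{ii}H_{ii}$ with $M_{ii}\ne0$ (both $G_{ii},H_{ii}$ are monic), so in the domain $\Q$ we get $d_i=\degD G_{ii}=\degD M_{ii}+h_i\ge h_i$. This already yields the forward implication of (a). For (b), if $d_i=h_i$ for every $i$ then $\degD M_{ii}=0$, so $M_{ii}\in\F(t)\setminus\{0\}$; matching leading coefficients in $G_{ii}=M_{ii}H_{ii}$ (both monic) forces $M_{ii}=1$. Hence $M$ is upper triangular with $1$'s on the diagonal, so $M=I+N$ with $N$ strictly upper triangular and $N^n=0$, whence $M$ is unimodular with inverse $I-N+\cdots\pm N^{n-1}$. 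Then $P=MU$ is a product of unimodular matrices, so it is unimodular, and $G=PA$ with $P$ unimodular and $G$ in Hermite form; by the uniqueness in Corollary~\ref{cor:hermuniq}, $G=H$. This settles (b).

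It remains to prove the converse of (a): assuming $d_i\ge h_i$ for all $i$, I would \emph{construct} an admissible pair. Take $M$ upper triangular with $M_{ii}=\D^{\,d_i-h_i}$ (legitimate since $d_i\ge h_i$) and off-diagonal entries produced by back substitution so that $G:=MH$ has every above-diagonal entry of degree less than the monic diagonal entry below it; this is possible because each $H_{jj}$ is monic, so the required right divisions exist. Then $G$ is upper triangular, its $i$th diagonal entry $\D^{\,d_i-h_i}H_{ii}$ is monic of degree $d_i$, and $G$ is reduced, so $G$ meets all three constraints. Setting $P:=MU$ gives $PA=MUA=MH=G$.

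The one real calculation is the degree bound on $P$. From $G_{ij}=M_{ij}H_{jj}+\sum_{i\le k<j}M_{ik}H_{kj}$ with $\degD G_{ij}<d_j$ and $\degD H_{kj}<h_j$ for $k<j$, solving for $M_{ij}$ and using that $H_{jj}$ is monic of degree $h_j$ yields $\degD M_{ij}\le\max(d_j-1-h_j,\ \max_{i\le k<j}\degD M_{ik}-1)$; since $\degD M_{ii}=d_i-h_i\le D$, an induction on $j$ gives $\degD M_{ij}\le D$ throughout. Combining with $\degD U_{kj}\le(n-1)d$ gives $\degD P_{ij}=\degD(MU)_{ij}\le D+(n-1)d=(n-1)d+\max_i d_i$, as required. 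I expect this degree bookkeeping, together with the care needed to keep left/right multiplication straight in the non-commutative setting, to be the main source of friction; the conceptual content is entirely in the upper-triangularity of $M=PU^{-1}$.
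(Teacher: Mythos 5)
Your proposal follows essentially the same route as the paper: both arguments hinge on the observation that $M=PU^{-1}$ is upper triangular because $G=MH$ with $H$ upper triangular, read $d_i=\degD M_{ii}+h_i$ off the diagonal to get the ``only if'' of (a), and obtain (b) by noting that $M$ upper triangular with ones on the diagonal is unimodular and invoking uniqueness (Corollary~\ref{cor:hermuniq}). The one place you genuinely diverge is the ``if'' direction of (a). The paper simply sets $D=\diag(\D^{d_1-h_1},\ldots,\D^{d_n-h_n})$ and takes $P=DU$, $G=DH$; but then $\degD G_{ij}=(d_i-h_i)+\degD H_{ij}$, which need not be less than $d_j$, so the paper's witness does not obviously satisfy the stated off-diagonal reduction constraint on $G$. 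Your construction---taking $M$ upper triangular with $M_{ii}=\D^{d_i-h_i}$ and choosing the off-diagonal $M_{ij}$ by right division by the monic $H_{jj}$ so that $G=MH$ is reduced---closes that gap, and your induction showing $\degD M_{ij}\le\max_i d_i$ keeps the degree of $P=MU$ within the required bound $(n-1)d+\max_i d_i$. So your version is slightly more work but is the more airtight of the two; the conceptual content is identical.
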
 
\begin{proof}
  The proof is similar to that of \cite{KalKri87}, Lemma 2.1.  Given a
  degree vector $(d_1,\ldots,d_n)$, we view $PA=G$ as a system of
  equations in the unknown entries of $P$ and $G$. Since $H$ is the
  Hermite form of $A$, there exist a unimodular matrix $U$ such that
  $UA=H$. Thus $PU^{-1}H=G$ and the matrix $PU^{-1}$ must be upper
  triangular since the matrices $H$ and $G$ are upper
  triangular. Moreover, since the matrix $PU^{-1}$ is in $\Q^\nxn$,
  and $G_{ii}=(PU^{-1})_{ii}\cdot H_{ii}$ for $1\leq i\leq n$, we know
  $d_i\geq h_i$ for $1\leq i\leq n$.  For the other direction, we
  suppose $d_i\geq h_i$ for $1\leq i\leq n$. Let $D=\diag(\D
  ^{d_1-h_1},\ldots,\D ^{d_n-h_n})$.  Then since $(DU)A=(DH)$, we can
  set $P=DU$ and $G=DH$ as a solution to $PA=G$, and the $i$th
  diagonal of $G$ has degree $d_i$ by construction. By
  Corollary~\ref{cor:degU}, we know $\degD U_{i,j}\leq(n-1)d$ and so
  $\degD P_{i,j}\leq(n-1)d+ \max_{1\leq i\leq n} d_i$.

  To prove (b), suppose $d_i=h_i$ for $1\leq i\leq n$ and that,
  contrarily, $G$ is \emph{not} the Hermite form of $A$. Since
  $PU^{-1}$ is an upper triangular matrix with ones on the diagonal,
  $PU^{-1}$ is a unimodular matrix.  Thus $P$ is a unimodular matrix
  and, by Corollary~\ref{cor:hermuniq}, $G$ \emph{is} the (unique)
  Hermite form of $A$, a contradiction.\qed
\end{proof}

\begin{lemma}
  \label{lem:Hlin}
  Let $A$, $P$, $(d_1,\ldots,d_n)$, and $G$ be as in
  Theorem~\ref{thm:hermsys}, and let \linebreak
  $\beta:=(n-1)d+\max_{1\leq i\leq n} d_i$.  Also, assume that $\degt
  A_{ij}\leq e$ for $1\leq i,j\leq n$.  Then we can express the system
  $PA=G$ as a linear system over $\F(t)$ as $\Plin\Alin=\Glin$ where
  \[
    \Plin \in\F(t)^{n\times n(\beta+1)}, \quad
    \Alin \in\F[t]^{n(\beta+1)\times n(\beta+d+1)}, \quad
    \Glin \in F(t)^{n\times n(\beta+d+1)}.
    \]
    Assuming the entries $\Alin$ are known while the entries of
    $\Plin$ and $\Glin$ are indeterminates, the system of equations
    from $\Plin\Alin=\Glin$ for the entries of $\Plin$ and $\Glin$ is
    linear over $\F(t)$ in its unknowns, and the number of equations
    and unknowns is $O(n^{3}d)$.  The entries in $\Alin$ are
    in $\F[t]$ and have degree at most $e$.
\end{lemma}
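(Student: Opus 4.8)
The plan is to linearize the pseudo-linear equation $PA=G$ by passing from differential polynomials to their coefficient vectors over $\F(t)$, so that left-multiplication by $P$ becomes an ordinary matrix product over the commutative field $\F(t)$. Concretely, I would write each entry $P_{ij}=\sum_{0\le k\le\beta}p_{ijk}\D^k$ with the unknown coefficients $p_{ijk}\in\F(t)$ placed, as always, to the \emph{left} of the powers of $\D$. The row of $\Plin$ indexed by $i$ then collects all $n(\beta+1)$ scalars $p_{ijk}$ (doubly indexed by $1\le j\le n$ and $0\le k\le\beta$), which fixes the shape $\Plin\in\F(t)^{n\times n(\beta+1)}$. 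Likewise I would record each entry $G_{i\ell}$ by its coefficient vector in $\D$; since $\degD(P_{ij}A_{j\ell})\le\beta+d$, every entry of $G$ has degree at most $\beta+d$, giving $\beta+d+1$ coefficients per entry and the shape $\Glin\in\F(t)^{n\times n(\beta+d+1)}$.

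The heart of the argument is the construction of $\Alin$. Expanding $G_{i\ell}=\sum_{j}P_{ij}A_{j\ell}=\sum_{j,k}p_{ijk}\bigl(\D^kA_{j\ell}\bigr)$ and extracting the coefficient of $\D^m$, I obtain
\[
[\D^m]G_{i\ell}=\sum_{j,k}p_{ijk}\cdot[\D^m]\bigl(\D^kA_{j\ell}\bigr).
\]
I would therefore define $\Alin$ to be the block matrix whose row indexed by $(j,k)$ and column indexed by $(\ell,m)$ is $[\D^m](\D^kA_{j\ell})$. With this definition the identity $\Plin\Alin=\Glin$ holds entrywise, since the $(i,(\ell,m))$ entry of $\Plin\Alin$ is exactly $\sum_{j,k}p_{ijk}\,[\D^m](\D^kA_{j\ell})$. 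The one point that must be checked is that this really is a linear system \emph{over} $\F(t)$: because the unknowns $p_{ijk}$ sit to the left of the powers of $\D$, the entire non-commutative (Ore) structure is absorbed into the \emph{known} differentiated coefficients $[\D^m](\D^kA_{j\ell})$, and since $\F(t)$ is commutative these scalar products assemble into a genuine matrix product. This is the step I expect to require the most care, as it is precisely where the differential structure is traded for commutative linear algebra.

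Next I would verify the entry bounds on $\Alin$. Each $A_{j\ell}\in\S$ has $\degt A_{j\ell}\le e$, and by the analysis of $\D^kf$ in Section~2 (Lemma~\ref{lem:shiftcost}) every coefficient of $\D^kA_{j\ell}$ again lies in $\F[t]$ with degree in $t$ at most $e$. Hence every entry of $\Alin$ is in $\F[t]$ of degree at most $e$, and $\Alin\in\F[t]^{n(\beta+1)\times n(\beta+d+1)}$ as claimed. The prescribed shape of $G$ (upper triangular, monic diagonal, $i$th diagonal entry of degree $d_i$, off-diagonal entries of lower degree) translates into simple constraints on the entries of $\Glin$, forcing certain coefficients to be $0$ or $1$; this only removes unknowns and introduces no nonlinearity.

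Finally I would count. Invoking Corollary~\ref{cor:Hnd} we may take $d_i\le nd$, so $\max_i d_i\le nd$ and $\beta=(n-1)d+\max_i d_i=O(nd)$. The unknowns are the entries of $\Plin$ and of $\Glin$ (the latter subject to the structural constraints on $G$), numbering $n\cdot n(\beta+1)+n\cdot n(\beta+d+1)=O(n^2\beta)=O(n^3d)$, and the equations are the $n\cdot n(\beta+d+1)=O(n^3d)$ entrywise identities of $\Plin\Alin=\Glin$. This gives the claimed $O(n^3d)$ bound on the size of the linear system, completing the plan.
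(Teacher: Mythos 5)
Your proposal is correct and follows essentially the same route as the paper: both linearize $PA=G$ by replacing each $P_{ij}$ with its coefficient row, each $A_{j\ell}$ with the block whose rows are the $\D$-coefficient vectors of $\D^k A_{j\ell}$, and each $G_{i\ell}$ with its coefficient vector, then use the structural constraints on $G$ and the bound $d_i\le nd$ to get the $O(n^3d)$ count and the degree-$e$ bound on $\Alin$ via Lemma~\ref{lem:shiftcost}. Your explicit entrywise verification that $[\D^m]G_{i\ell}=\sum_{j,k}p_{ijk}[\D^m](\D^kA_{j\ell})$ is a welcome elaboration of a step the paper leaves implicit, but it is not a different argument.
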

\begin{proof}
  Since $\degD P_{i,j}\leq \beta$, each entry of $P$ has at most
  $(\beta+1)$ coefficients in $\F(t)$ and can be written as
  $P_{ij}=\sum_{0\leq k\leq \beta} P_{ijk}\D^k$.  We let
  $\Plin\in\F(t)^{n\times n(\beta+1)}$ be the matrix formed from $P$
  with $P_{ij}$ replaced by the row vector
  $(P_{ij0},\ldots,P_{ij\beta})\in\F(t)$.

  Since $\degD P\leq \beta$, when forming $PA$, the entries in $A$ are
  multiplied by $\D^\ell$ for $0\leq\ell\leq\beta$, resulting in
  polynomials of degree in $\D$ of degree at most $\mu=\beta+d$.  Thus, we
  construct $\Alin$ as the matrix formed from $A$ with $A_{ij}$
  replaced by the $(\beta+1)\times (\mu+1)$ matrix whose $\ell$th row
  is
  \vspace*{-4pt}
  \[
  (A^{[\ell]}_{ij0}, A^{[\ell]}_{ij1}, \ldots, A^{[\ell]}_{ij\mu})
  ~~\mbox{such that}~~
  \D^\ell A_{ij} = A^{[\ell]}_{ij0}+A^{[\ell]}_{ij1}\D +
  \cdots +A^{[\ell]}_{ij\mu}\D^{\mu}.
  \]
  \vspace*{-15pt}

  \noindent
  Note that by Lemma~\ref{lem:shiftcost} we can compute
  $\D^{\ell}A_{i,j}$ quickly.

  Finally, we construct the matrix $\Glin$.  Each entry of $G$ has
  degree in $\D$ of degree at most $nd\leq n(\beta+d+1)$.  Thus,
  initially $\Glin$ is the matrix formed by $G$ with $G_{ij}$ replaced
  by
  \vspace*{-5pt}
  \[
  (G_{ij0},\ldots,G_{ij\mu})
  ~~~\mbox{where}~~~
  G_{ij}=G_{ij0}+G_{ij1}\D+\cdots+G_{ij\mu}\D^\mu.
  \]
  \vspace*{-16pt}

  \noindent
  However, because of the structure of the system we can fix values of
  many of the entries of $\Glin$ as follows.  First, since every
  diagonal entry of the Hermite form is monic, we know the
  corresponding entry in $\Glin$ is $1$.  Also, by
  Corollary~\ref{cor:Hnd}, the degree in $\D$ of every diagonal entry
  of $H$ is bounded by $nd$, and every off-diagonal has degree in $\D$
  less than that of the diagonal below it (and hence less than $nd$),
  and we can set all coefficients of larger powers of $\D$ to $0$ in
  $\Glin$.

  The resulting system $\Plin\Alin=\Glin$, restricted as above
  according to Theorem~\ref{thm:hermsys}, has $O(n^{3}d)$ linear
  equations in $O(n^3d)$ unknowns.  Since the coefficients in $\Alin$
  are all of the form $\D^\ell A_{ij}$, and since this does not affect
  their degree in $t$, the degree in $t$ of entries of $\Alin$ is the
  same as that of $A$, namely $e$.  \qed
\end{proof}

With more work, we believe the dimension of the system can be reduced
to $O(n^{2}d)\times O(n^{2}d)$ if we apply the techniques presented in
\cite{storjohann:ms1994} Section 4.3, wherein
the unknown coefficients of $\Glin$ are removed from the system.
See also \cite{Labhalla:1996}.

So far, we have shown how to convert the differential system over $\Q$
into a linear system over $\F(t)$. Also, we note, by
Theorem~\ref{thm:hermsys}, that the correct degree of the $i$th
diagonal entry in the Hermite form of $A$ can be found by seeking the
smallest non-negative integer $k$ such that $PA=G$ is consistent when
$\degD G_{j,j}=nd$ for $j=1,\ldots,i-1,i+1,\ldots,n$ and $k\leq\degD
G_{i,i}$. Using binary search, we can find the correct degrees of all
diagonal entries by solving at most $O(n\log(nd))$ systems.  We then
find the correct degrees of the diagonal entries in the Hermite form
of $A$, solving the system $PA=G$ with the correct diagonal degrees
gives the matrices $U$ and $H$ such that $UA=H$ where $H$ is the
Hermite form of $A$.

\begin{theorem}
  Let $A\in\F[t][\D;\delta]^\nxn$ with $\degD A_{ij}\leq d$ and
  $\degt A_{ij}\leq e$ for $1\leq i,j\leq n$. Then we can compute the
  Hermite form $H\in\Q$ of $A$, and a unimodular $U\in\S$ such that
  $UA=H$, with $O((n^{10}d^3+n^7d^2e)\log(nd))$ operations in $\F$
\end{theorem}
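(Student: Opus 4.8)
The plan is to turn the whole computation into a short sequence of linear-algebra problems over the commutative field $\F(t)$, exactly as set up in Theorem~\ref{thm:hermsys} and Lemma~\ref{lem:Hlin}, and then count $\F$-operations carefully. First I would fix, once and for all, the linearized coefficient matrix $\Alin$: by Lemma~\ref{lem:Hlin} its entries are the $t$-coefficients of the shifts $\D^\ell A_{ij}$ for $0\le\ell\le\beta=(n-1)d+\max_i d_i=O(nd)$, each of $t$-degree at most $e$, and by Lemma~\ref{lem:shiftcost}(i) all of these shifts can be produced in $O(n^2\cdot d\,e\,\beta)=O(n^3 d^2 e)$ operations in $\F$. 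This is a one-time cost that will be dominated. By Lemma~\ref{lem:Hlin} the resulting system $\Plin\Alin=\Glin$ has $N=O(n^3 d)$ unknowns and $O(n^3 d)$ equations over $\F(t)$, with all coefficients polynomials in $t$ of degree at most $e$.

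Second, I would determine the diagonal degrees $h_1,\dots,h_n$ of $H$. By Theorem~\ref{thm:hermsys}(a) the restricted system is consistent precisely when the guessed degrees $d_i$ dominate the true $h_i$, and by Corollary~\ref{cor:Hnd} each $h_i\le nd$. Hence each $h_i$ can be located by binary search over $\{0,\dots,nd\}$, testing $O(\log(nd))$ systems per diagonal entry, for $O(n\log(nd))$ consistency tests in total (as already noted after Lemma~\ref{lem:Hlin}). The key cost question is the cost of one consistency test. Here I would exploit that $\F$ has characteristic zero and is infinite: consistency of a linear system over $\F(t)$ whose matrix lies in $\F[t]$ is preserved under an evaluation $t\mapsto t_0$ for all but finitely many $t_0$, the exceptional values being roots of finitely many minors, each of $t$-degree $O(Ne)$. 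Thus one test amounts to evaluating the $O(N^2)$ polynomial entries, each of degree $\le e$, at a single good point ($O(N^2 e)$ operations) and then testing consistency of an $N\times N$ system over $\F$ by Gaussian elimination ($O(N^3)$ operations). With $N=O(n^3 d)$ this is $O(N^3+N^2 e)=O(n^9 d^3+n^6 d^2 e)$ per test.

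Third, once the correct degrees are known, Theorem~\ref{thm:hermsys}(b) guarantees a solution with $P$ unimodular and $G=H$, unique by Corollary~\ref{cor:hermuniq}, and I would solve this final system to read off the rational-function entries of $U=P$ and $H=G$. Multiplying the two phases, the binary search contributes $O(n\log(nd))\cdot O(n^9 d^3+n^6 d^2 e)=O((n^{10}d^3+n^7 d^2 e)\log(nd))$ operations in $\F$, matching the claimed bound, while the setup cost $O(n^3 d^2 e)$ is absorbed.

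The main obstacle is precisely this last step: extracting the actual rational-function entries of $U$ and $H$, rather than merely testing consistency at a single point. Cramer's rule only yields a $t$-degree bound of $O(Ne)=O(n^3 d e)$ on those entries, so a naive evaluation--interpolation recovery would solve at $\Theta(n^3 d e)$ points and blow the budget. The delicate part is therefore to control the growth of the degree in $t$ during the $\F(t)$-linear algebra---either by showing that the true $t$-degrees of $U$ and $H$ are much smaller than the generic Cramer bound, or by organizing a fraction-free elimination with evaluation--interpolation tuned to the genuine output degree---so that the recovery of $U$ and $H$ stays within the same $O(n^9 d^3+n^6 d^2 e)$ envelope as a single consistency test. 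Accounting for the coefficient bit-length when $\F=\QQ$ is an analogous and routine-but-tedious extension, using the norm bounds of Lemma~\ref{lem:shiftcost}(ii).
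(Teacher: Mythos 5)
Your reduction is exactly the paper's: the same linearization $\Plin\Alin=\Glin$ of size $m=O(n^3d)$ with entries of $t$-degree at most $e$, and the same binary search over the diagonal degrees yielding $O(n\log(nd))$ linear systems over $\F(t)$. The gap is in how each such system is handled, and you flag it yourself in your final paragraph: testing consistency at one evaluation point does not produce the entries of $U$ and $H$, and evaluation--interpolation at the $\Theta(me)=\Theta(n^3de)$ points forced by the Cramer degree bound would cost on the order of $m^4e$ and exceed the budget. The single missing idea is the $t$-adic lifting approach of \cite{Dix82}: invert the system modulo $t$ in $O(m^3)$ operations and lift modulo powers of $t$ up to the Cramer bound, which solves each $m\times m$ system with degree-$e$ entries outright in $O(m^3+m^2e)$ operations in $\F$ --- the same envelope as your single consistency test. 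Applied to every system in the binary search, this simultaneously decides consistency and, for the final system, delivers the rational-function entries of $U$ and $H$; no sharpening of the $t$-degree bounds on the output is needed.

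A secondary defect: your consistency test by evaluation is not deterministic as stated. A system that is consistent over $\F(t)$ can evaluate to an inconsistent one at a bad point (and conversely), and certifying that a chosen $t_0$ avoids the roots of the relevant minors of degree $O(me)$ without randomization, and without testing $\Theta(me)$ candidate points, is not addressed. The lifting strategy makes this device unnecessary. So the proposal has the right architecture and the right intermediate counts, but the concluding step --- the one your own text calls the delicate part --- is precisely where the proof is incomplete.
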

\begin{proof}
  Lemma~\ref{lem:Hlin} and the following discussion, above shows that
  computing $U$ and $H$ is reduced to solving $O(n\log(nd))$ systems
  of linear equations over $\F(t)$, each of which is $m\times m$ for
  $m=O(n^3d)$ and in which the entries have degree $e$.  Using
  standard linear algebra this can be solved with $O(m^4e)$ operations
  in $\F$, since any solution has degree at most $me$ (see
  \cite{GatGer03}).  A somewhat better strategy is to use the $t$-adic
  lifting approach of \cite{Dix82}, which would require $O(m^3+m^2e)$
  operations in $\F$ for each system, giving a total cost
  of $O((n^{10}d^3+n^7d^2e)\log(nd))$ operations in $\F$. \qed
\end{proof}
  
As noted above, it is expected that we can bring this cost down 
through a smaller system similar to that of \cite{storjohann:ms1994},
to a cost of $O((n^7d^2+n^5d^2e)\log(nd))$.   Nonetheless, the algorithm
as it is stated achieves a guaranteed polynomial-time solution.

It is often the case that we are considering differential systems over
$\QQ(t)[\D;\delta]$, where we must contend with growth in coefficients
in $\D$, $t$ \emph{and} in the size of the rational coefficients.
However, once again we may employ the fact that the Hermite form and
unimodular transformation matrix are solutions of a linear system over
$\QQ[t]$.  For convenience, we can assume in fact that our input is in
$\ZZ[t][\D;\delta]^\nxn$ (since the rational matrix to eliminate
denominators is unimodular in $\QQ(t)[\D;\delta]$).  There is some
amount of extra coefficient growth when going from $A$ to $\Alin$;
namely we take up to $nd$ derivatives, introducing a multiplicative
constant of size around $\min ((nd)!,e!)$.  In terms of the bit-length
of the coefficients, this incurs a multiplicative blow-up of only
$O(\ell\log(\ell))$ where $\ell=\min(nd,e)$.  It follows that we can
find the Hermite form of $A\in\QQ(t)[\D;\delta]^\nxn$ in time
polynomial in $n$, $\degt A_{ij}$, $\degD A_{ij}$, and
$\log\|A_{ij}\|$, the maximum coefficient length in an entry, for
$1\leq i,j\leq n$.  A modular algorithm, for example along the lines
of \cite{Li:1997}, would improve performance considerably, as might
$p$-adic solvers and a more careful construction of the linear system.

\section{Conclusions and Future Work}

We have shown that the problem of computing the Hermite form of a
matrix over $\F(t)[\D;\delta]$ can be accomplished in polynomial time.
Moreover, our algorithm will also control growth in coefficient
bit-length when $\F=\QQ$.  We have also shown that the degree bounds
on Hermite forms in the differential ring are very similar to the
regular polynomial case.  From a practical point of view our method is
still expensive.  Our next work will be to investigate more efficient
algorithms.  We have suggested ways to compress the system of
equations and to employ structured matrix techniques.  Also, the use
of randomization has been shown to be highly beneficial over $\F[t]$,
and should be investigated in this domain.  Finally, our approach
should be applicable to difference polynomials and more general Ore
polynomial rings.

\renewcommand\bibsection{\section*{References}}


\begin{thebibliography}{25}
\providecommand{\natexlab}[1]{#1}
\providecommand{\url}[1]{\texttt{#1}}
\expandafter\ifx\csname urlstyle\endcsname\relax
  \providecommand{\doi}[1]{doi: #1}\else
  \providecommand{\doi}{doi: \begingroup \urlstyle{rm}\Url}\fi

\bibitem[Abramov and Bronstein(2001)]{AbramovBronstein:2001}
S.~Abramov and M.~Bronstein.
\newblock On solutions of linear functional systems.
\newblock In \emph{Proc. ACM International Symposium on Symbolic and Algebraic
  Computation}, pages 1--7, 2001.

\bibitem[Beckermann et~al.(2006)Beckermann, Cheng, and
  Labahn]{BeckermannChengLabahn:2006}
B.~Beckermann, H.~Cheng, and G.~Labahn.
\newblock Fraction-free row reduction of matrices of ore polynomials.
\newblock \emph{Journal of Symbolic Computation}, 41\penalty0 (1):\penalty0
  513--543, 2006.

\bibitem[Bostan et~al.(2008)Bostan, Chyzak, and Roux]{BCL:2008}
A.~Bostan, F.~Chyzak, and N.~Le Roux.
\newblock Products of ordinary differential operators by evaluation and
  interpolation.
\newblock In \emph{Proc. International Symposium on Symbolic and Algebraic
  Computation}, pages 23--30, 2008.

\bibitem[Bronstein and Petkov{\v s}ek(1994)]{BroPet94}
M.~Bronstein and M.~Petkov{\v s}ek.
\newblock On {O}re rings, linear operators and factorisation.
\newblock \emph{Programmirovanie}, 20:\penalty0 27--45, 1994.

\bibitem[Cheng(2003)]{cheng:phd}
H.~Cheng.
\newblock \emph{Algorithms for Normal Forms for Matrices of Polynomials and
  {O}re Polynomials}.
\newblock PhD thesis, University of Waterloo, 2003.
\newblock URL \url{http://www.cs.uleth.ca/~cheng/publications.html}.

\bibitem[Davies et~al.(2008)Davies, Cheng, and Labahn]{DaviesChengLabahn:2008}
P.~Davies, H.~Cheng, and G.~Labahn.
\newblock Computing {P}opov form of general {O}re polynomial matrices.
\newblock In \emph{Milestones in Computer Algebra}, pages 149--156, 2008.

\bibitem[Dickson(1923)]{Dickson:1923}
L.E. Dickson.
\newblock \emph{Algebras and their arithmetics}.
\newblock G.E. Stechert, New York, 1923.

\bibitem[Dieudonn{\'e}(1943)]{Dieu:1943}
M.~Jean Dieudonn{\'e}.
\newblock Les d{\'e}terminants sur un corps non commutatif.
\newblock \emph{Bulletin de la Soci\'et\'e Math\'ematique de {F}rance},
  71:\penalty0 27--45, 1943.

\bibitem[Dixon(1982)]{Dix82}
J.D. Dixon.
\newblock Exact solution of linear equations using {$p$}-adic expansions.
\newblock \emph{Numer. Math.}, 40:\penalty0 137--141, 1982.

\bibitem[\Gathen{von zur Gathen} and Gerhard(2003)]{GatGer03}
J.~\Gathen{von zur Gathen} and J.~Gerhard.
\newblock \emph{Modern Computer Algebra}.
\newblock Cambridge University Press, Cambridge, New York, Melbourne, 2003.
\newblock ISBN 0521826462.

\bibitem[Hermite(1863)]{Hermite:1863}
C.\ Hermite.
\newblock Sur les fonctions de sept lettres.
\newblock \emph{C.R.\ Acad.\ Sci.\ Paris}, 57:\penalty0 750--757, 1863.
\newblock \OE uvres, vol.\ 2, Gauthier-Villars, Paris, 1908, pp. 280--288.

\bibitem[\Hoeven{van der Hoeven}(2002)]{Hoeven:2002}
J.~\Hoeven{van der Hoeven}.
\newblock {FFT}-like multiplication of linear differential operators.
\newblock \emph{Journal of Symbolic Computation}, 33\penalty0 (1):\penalty0 123
  -- 127, 2002.

\bibitem[Jacobson(1943)]{Jacobson:1943}
N.~Jacobson.
\newblock \emph{The Theory of Rings}.
\newblock American Math. Soc., New York, 1943.

\bibitem[Kaltofen et~al.(1987)Kaltofen, Krishnamoorthy, and Saunders]{KalKri87}
E.~Kaltofen, M.~S. Krishnamoorthy, and B.~D. Saunders.
\newblock Fast parallel computation of {H}ermite and {S}mith forms of
  polynomial matrices.
\newblock \emph{SIAM J. Algebraic and Discrete Methods}, 8:\penalty0 683--690,
  1987.

\bibitem[Kannan(1985)]{Kannan:1985}
R.~Kannan.
\newblock Polynomial-time algorithms for solving systems of linear equations
  over polynomials.
\newblock \emph{Theoretical Computer Science}, 39:\penalty0 69--88, 1985.

\bibitem[Labhalla et~al.(1996)Labhalla, Lombardi, and Marlin]{Labhalla:1996}
S.~Labhalla, H.~Lombardi, and R.~Marlin.
\newblock Algorithmes de calcul de la r\'eduction de {H}ermite d'une matrice \`
  coefficients polynomiaux.
\newblock \emph{Theoretical Computer Science}, 161\penalty0 (1--2):\penalty0
  69--92, 1996.

\bibitem[Li(1998)]{Li:1998}
Z.~Li.
\newblock A subresultant theory for {O}re polynomials with applications.
\newblock In \emph{Proc. International Symposium on Symbolic and Algebraic
  Computation}, pages 132--139, 1998.

\bibitem[Li and Nemes(1997)]{Li:1997}
Z.~Li and I.~Nemes.
\newblock A modular algorithm for computing greatest common right divisors of
  ore polynomials.
\newblock In \emph{ISSAC '97: Proceedings of the 1997 international symposium
  on Symbolic and algebraic computation}, New York, NY, USA, 1997. ACM.

\bibitem[Middeke(2008)]{Middeke:2008}
J.~Middeke.
\newblock A polynomial-time algorithm for the jacobson form for matrices of
  differential operators.
\newblock Technical Report 08-13, Research Institute for Symbolic Computation
  (RISC), Linz, Austria, 2008.

\bibitem[Mulders and Storjohann(2003)]{MuldersStorjohann:2003}
T.~Mulders and A.~Storjohann.
\newblock On lattice reduction for polynomial matrices.
\newblock \emph{Journal of Symbolic Computation}, 35\penalty0 (4):\penalty0
  377--401, 2003.

\bibitem[Popov(1972)]{Popov:1972}
V.~Popov.
\newblock Invariant description of linear, time-invariant controllable systems.
\newblock \emph{SIAM J. Control}, 10:\penalty0 252--264, 1972.

\bibitem[Smith(1861)]{Smith:1861}
H.~J.~S. Smith.
\newblock On systems of linear indeterminate equations and congruences.
\newblock \emph{Philos. Trans. Royal Soc. London}, 151:\penalty0 293--326,
  1861.

\bibitem[Storjohann(1994)]{storjohann:ms1994}
A.~Storjohann.
\newblock Computation of {H}ermite and {S}mith normal forms of matrices.
\newblock Master's thesis, University of Waterloo, 1994.

\bibitem[Villard(1995)]{Villard:1995}
G.~Villard.
\newblock Generalized subresultants for computing the smith normal form of
  polynomial matrices.
\newblock \emph{Journal of Symbolic Computation}, 20:\penalty0 269--286, 1995.

\bibitem[Wedderburn(1932)]{Wedderburn:1932}
J.H.M. Wedderburn.
\newblock Non-commutative domains of integrity.
\newblock \emph{Journal f{\"u}r die reine und angewandte Mathematik},
  167:\penalty0 129--141, 1932.

\end{thebibliography}

\newcommand{\Gathen}{\relax}\newcommand{\Hoeven}{\relax}

\end{document}